\newcommand{\knip}[1]{}
\newcommand{\NP}{{\sf NP}}
\newcommand{\W}{\ensuremath{\mathsf{W}}}
\newcommand{\etal}{et al.~}
\newcommand{\ie}{i.e.~}
\newcommand{\eg}{e.g.~}
\newcommand{\marge}[1]{}
\DeclareSymbolFont{AMSb}{U}{msb}{m}{n}
\DeclareSymbolFontAlphabet{\mathbb}{AMSb}
\newcommand{\problemIDP}{\textsc{$k$-Induced Disjoint Paths}}
\newcommand{\problemIAP}{\textsc{$k$-in-a-Path}}
\newcommand{\problemIAT}{\textsc{$k$-in-a-Tree}}
\newcommand{\problemIAC}{\textsc{$k$-in-a-Cycle}}
\begin{document}

\title{Induced Disjoint Paths in Claw-Free Graphs\thanks{This work is supported by EPSRC (EP/G043434/1) and Royal Society (JP100692).
The research leading to these results has also received funding from the European Research Council under the European Union's Seventh Framework Programme (FP/2007-2013) / ERC Grant Agreement n. 267959.
A preliminary version of this paper appeared as an extended abstract in the proceedings of ESA 2012~\cite{GPV12b}.
}}
\author{
Petr A.~Golovach\inst{1}
\and Dani\"el Paulusma\inst{2}
\and Erik Jan van Leeuwen\inst{3}
\institute{Department of informatics, University of Bergen, Norway\\
\email{petr.golovach@ii.uib.no}
\and
School of Engineering and Computer Science, Durham University, UK\\ 
\email{daniel.paulusma@durham.ac.uk}
\and 
Max-Planck Institut f\"{u}r Informatik, Saarbr\"{u}cken, Germany,
\email{erikjan@mpi-inf.mpg.de}
}}
\maketitle
\thispagestyle{plain}

\begin{abstract}
Paths $P_1,\ldots,P_k$ in a graph $G=(V,E)$ are said to be mutually induced if 
for any $1\leq i<j \leq k$, $P_i$ and $P_j$  have neither common vertices nor adjacent vertices (except perhaps their end-vertices). 
The {\sc Induced Disjoint Paths} problem is to test whether 
a graph $G$ with $k$ pairs of specified vertices $(s_i,t_i)$ contains 
$k$ mutually induced paths $P_i$ such that $P_i$ connects $s_i$ and $t_i$ for $i=1,\ldots,k$.
We show that this problem is fixed-parameter tractable for claw-free graphs when parameterized by $k$.
Several related problems, such as the {\problemIAP} problem, are 
 proven
to be fixed-parameter tractable for claw-free graphs as well.
We show that an improvement of these results in certain directions is unlikely, for example by noting that the 
{\sc Induced Disjoint Paths} problem cannot have a polynomial kernel for line graphs (a  type
of claw-free graphs), unless \NP\ $\subseteq$\ co$\NP/$poly. Moreover, the problem becomes 
\NP-complete, even when $k=2$, for the more general class of $K_{1,4}$-free graphs.
Finally, we show that the $n^{O(k)}$-time algorithm of Fiala \etal for testing whether a claw-free graph contains some $k$-vertex graph $H$ as a topological induced minor is essentially optimal by proving that this problem is \W[1]-hard even if $G$ and $H$ are line graphs.
\end{abstract}

\section{Introduction}\label{s-intro}
The problem of finding disjoint paths of a certain type in a graph has received considerable attention in recent years. The regular {\sc Disjoint Paths} problem is to test whether a graph $G$ with $k$ pairs of specified vertices $(s_i,t_i)$ contains a set of $k$ mutually vertex-disjoint paths $P_1,\ldots,P_k$ 
such that $P_i$ has end-vertices $s_i$ and $t_i$ for $i=1,\ldots,k$.
The subgraph of $G$ induced by the vertices of these paths is called a {\it linkage}.
This problem is included in Karp's list of \NP-compete problems~\cite{Ka75}, provided that $k$ is part of the input.
If $k$ is any {\it fixed} integer, that is, not part of the input, the problem is called {\sc $k$-Disjoint Paths} and can be solved in $O(n^3)$ time for $n$-vertex graphs, as shown by  Robertson and Seymour~\cite{RS95} in one of their keystone papers on graph minor theory.

In this paper, we study a generalization of the {\sc Disjoint Paths} problem by considering its \emph{induced} version.
We say that
paths $P_1,\ldots, P_k$ in a graph $G=(V,E)$  are {\it mutually induced} if 
for any $1\leq i<j \leq k$, $P_i$ and $P_j$  have neither common vertices,  
\ie $V(P_i)\cap V(P_j)=\emptyset$,
nor adjacent vertices, \ie $uv\notin E$ for any $u\in V(P_i), v\in V(P_j)$, except perhaps their end-vertices. 
The subgraph of $G$ induced by the vertices of such paths is called an {\it induced linkage}.
We observe that the paths $P_1,\ldots, P_k$ are not required to be induced paths in $G$. However, this may be assumed without loss of generality, because we can
replace non-induced paths by short\-cuts. 

We can now define the following problem, where we call the vertex pairs specified in the input {\it terminal pairs} and their vertices {\it terminals}.

\ \\[-0.3cm]
\noindent
{\sc Induced Disjoint Paths}\\
{\it Instance:} a graph $G$ with $k$ terminal pairs $(s_i,t_i)$ for
$i=1,\ldots,k$.\\
{\it Question:} does $G$ contain $k$ mutually induced paths $P_i$ such
that $P_i$ connects\\ \hspace*{1.5cm} terminals $s_i$ and $t_i$ for $i=1,\ldots,k$?

\ \\[-0.3cm]
\noindent
When $k$ is fixed, we call this the {\problemIDP} problem.

Observe that the {\sc Induced Disjoint Paths} problem can indeed be seen as a generalization of the {\sc Disjoint Paths} problem, since the latter can be reduced to the former by subdividing every edge of the graph. This generalization makes the problem significantly harder. In contrast to the original, non-induced version, the {\sc $k$-Induced Disjoint Paths} problem is \NP-complete even for $k=2$~\cite{Bi91,Fe89}.

The hardness of the {\problemIDP} problem motivates an investigation into graph classes for which it may still be tractable. 
Below, we briefly survey existing results.

\subsection{Known Results for Special Graph Classes}\label{s-known}

 For planar graphs, {\sc Induced Disjoint Paths} stays \NP-complete, as it generalizes {\sc Disjoint Paths} for planar graphs, which is \NP-complete as shown by
Lynch~\cite{Ly75}.
However,  Kobayashi and Kawarabayashi~\cite{KK12} presented
an algorithm for {\problemIDP} on planar graphs that runs in linear time for any fixed $k$, 
improving on an earlier algorithm by Reed, Robertson, Schrijver and Seymour~\cite{RRSS93}.
For AT-free graphs~\cite{GPV12} and chordal graphs~\cite{BGHHKP12},  {\sc Induced Disjoint Paths} is polynomial-time solvable,
whereas the problem is linear-time solvable for circular-arc graphs~\cite{GPV13}.

For claw-free graphs (graphs where no vertex has three pairwise nonadjacent neighbors), Fiala~\etal\cite{FKLP12} 
showed that the {\sc Induced Disjoint Paths} problem is \NP-complete.
They showed that this holds even for line graphs, a subclass of the class of claw-free graphs. They also gave a polynomial-time algorithm for {\problemIDP} for any fixed $k$. Their approach is based on a modification of the claw-free input graph to a special type of claw-free graph, namely to a quasi-line graph, in order to use the characterization of quasi-line graphs  by Chudnovsky and Seymour~\cite{CS05}. 
This transformation may require $\Omega(n^{2k})$  time due to some 
brute-force guessing, in particular as claw-freeness must be preserved. 

\subsection{Related Problems}\label{s-related}

A study on induced disjoint paths can also be justified from another direction, one that focuses on detecting induced subgraphs such as cycles, paths, and trees that contain some set of $k$ specified vertices, which are also called terminals. The corresponding decision problems are called {\problemIAC}, 
{\problemIAP}, and {\problemIAT}, respectively.
These problems are closely related to each other and to the {\sc $k$-Induced Disjoint Paths} problem. 

For general graphs, even the problems  {\sc $2$-in-a-Cycle} and {\sc $3$-in-a-Path} are \NP-complete~\cite{Bi91,Fe89}, whereas the {\problemIAT} problem is polynomial-time solvable for $k=3$~\cite{CS10}, open for any fixed $k \geq 4$, and \NP-complete when $k$ is part of the input~\cite{derhy2009}. Several polynomial-time solvable cases are known for graph classes, see \eg\cite{DPT09,GPV12,KK09,LT10,RRSS93}.
For claw-free graphs, the problems {\problemIAT} and {\problemIAP} are equivalent and polynomial-time solvable for any fixed integer $k$~\cite{FKLP12}. Consequently, the same holds for the {\problemIAC} problem~\cite{FKLP12}. 

As a final motivation for our work, we note that just as disjoint paths are important for  (topological) graph minors, one may hope that induced disjoint paths are useful for finding induced (topological) minors in polynomial time on certain graph classes.
Whereas the problems of
detecting whether a graph contains some fixed graph $H$ as a minor or topological minor can be solved in cubic time for any fixed graph $H$~\cite{GKMW11,RS95}, the
complexity classifications of both problems with respect to some fixed graph $H$ as induced minor or induced topological minor are still wide open.
So far, only partial results~\cite{FKMP95,FKP12b,LLMT09}, which consist of both polynomial-time solvable and \NP-complete cases, are known for the latter two problems on general graphs. 
In contrast, Fiala \etal\cite{FKLP12} use  their algorithm for
the {\sc $k$-Induced Disjoint Paths} problem  to obtain an $n^{O(k)}$-time algorithm that solves 
the problem of testing whether a claw-free graph $G$ on $n$ vertices contains a graph $H$ on $k$ vertices as a topological induced minor. 
This problem is also called the {\sc Induced Topological Minor} problem.
For fixed graphs $H$, the {\sc Induced Topological Minor}  problem
is also known to be polynomial-time solvable for AT-free graphs~\cite{GPV12} and chordal graphs~\cite{BGHHKP12}.
When parameterized by $|V(H)|$, it has been proven to be \W[1]-hard for cobipartite graphs~\cite{GPV12} (which form a subclass of AT-free graphs)
and for split graphs~\cite{GKPT12} (which form a subclass of chordal graphs).

\subsection{Our Results}\label{s-ours}

We provide new insights into the computational complexity of the {\sc Induced Disjoints Paths} problem and the related problems  {\problemIAP} and 
{\sc Induced Topological Minor} for claw-free and line graphs.

In Section~\ref{s-fpt} we improve on the aforementioned result of Fiala \etal\cite{FKLP12} by showing that  
{\sc Induced Disjoint Paths}
is  fixed-parameter tractable on claw-free graphs when parameterized by the number of terminal pairs $k$, that is,
can be solved in time~$f(k)n^{O(1)}$ on $n$-vertex claw-free graphs with $k$ terminal pairs, where $f$ is
some computable function $f$ that only depends on $k$.
Our approach circumvents the time-consuming transformation to quasi-line graphs of Fiala~\etal\cite{FKLP12}, and is based on an algorithmic application of the characterization for claw-free graphs by Chudnovsky and Seymour. Hermelin~\etal\cite{HMVW11} recently applied such an algorithmic structure theorem to \textsc{Dominating Set} on claw-free graphs. 
However, their algorithm reduces the strip-structure to have size polynomial in $k$ and then follows an exhaustive enumeration strategy. For {\problemIDP}, such an approach seems unsuitable, and our arguments thus differ substantially from those in~\cite{HMVW11}.

In Section~\ref{s-fpt} we also prove that the problems {\problemIAP} (or equivalently {\problemIAT}) and {\problemIAC} are fixed-parameter tractable when parameterized by $k$. This gives some answer to an open question of Bruhn and Saito~\cite{BS12}. They gave necessary and sufficient conditions for the existence of a path through three given vertices in a claw-free graph and
asked whether such conditions also exist for {\problemIAP} with $k\geq 4$.  However, as this problem is \NP-complete even for line graphs when $k$ is part of the input~\cite{FKLP12}, 
showing that it is fixed-parameter tractable may be the best answer 
we can hope for. 

Recall that Fiala \etal\cite{FKLP12}  gave an $n^{O(k)}$-time algorithm for testing
whether a claw-free graph $G$ on $n$ vertices contains a graph $H$ on $k$ vertices as a topological induced minor. 
We  prove in Section~\ref{s-con} that 
{\sc Induced Topological Minor} is  \W[1]-hard 
when parameterized by $|V(H)|$, 
even if $G$ and $H$ are line graphs.
This means that this problem is unlikely to be fixed-parameter tractable for this graph class.

In Section~\ref{s-con} we also show that our results for the 
{\sc Induced Disjoint Paths}
problem for claw-free graphs are best possible in the following ways. 
First, we show that the 
problem does not allow a polynomial kernel even for line graphs, unless \NP\ $\subseteq$\ co\NP$/$poly.
Second, we observe that a result from Derhy and Picouleau~\cite{derhy2009} immediately implies that {\sc $2$-Induced Disjoint-Paths} is \NP-complete on $K_{1,4}$-free graphs (graphs in which no vertex has four pairwise nonadjacent neighbors). 
We also state some related open problems in this section.

\section{Preliminaries}\label{s-pre}

{\bf Basic Graph Terminology.}
We only consider finite undirected graphs that have no loops and no multiple edges.
We refer to the text-book of Diestel~\cite{Di05} for any  standard graph terminology not used in our paper.  

For a subset $S\subseteq V$, the graph $G[S]$ denotes the subgraph of $G=(V,E)$ {\it induced by} $S$, that is, 
the graph with vertex set $S$ and edge set $\{uv \in E \mid  u,v\in S\}$.  We write $G-S=G[V\setminus S]$. For a vertex $u$ and a subgraph $F$ of $G$ that does not contain $u$ we write $F+u=G[V_F\cup \{u\}]$. We call the vertices $v_1$ and $v_r$ of a path $P=v_1\cdots v_r$ the {\it ends} or {\it end-vertices} of $P$. 

An {\it independent set} in a graph $G$ is a set of vertices that are mutually non-adjacent.  
We denote the maximum size of an independent set in a graph $G$ by $\alpha(G)$.
A {\it clique} in a graph $G$ is a set of vertices that are mutually adjacent.

Let $G=(V,E)$ be a graph. We denote the (open) neighborhood of a vertex $u$ by $N_G(u)=\{v \mid uv\in E\}$ and its closed neighborhood by $N_G[u]=N_G(u)\cup \{u\}$. We denote the  neighborhood of a set $U\subseteq V$ by $N_G(U)=\{v\in V\setminus U \mid  uv\in E\; \mbox{for some}\; u\in U\}$, 
and $N_G[U]=U\cup N_G(U)$. We omit indices if it does not create confusion.

\medskip
\noindent
{\bf Graph Operations and Containment Relations.}
Let $e=uv$ be an edge in a graph $G$. The \emph{edge contraction} of $e$
removes $u$ and $v$ from $G$, and replaces them by a new vertex adjacent to
precisely those vertices to which $u$ or $v$ were adjacent. 
In the case that one of the
two vertices, say $u$, has exactly two neighbors that in addition are
nonadjacent, then we call this operation the \emph{vertex dissolution} of
$u$.

Let $G$ and $H$ be two graphs. Then $G$ contains $H$ as an {\it induced minor} or {\it induced topological minor} if $G$ can be modified into $H$ by 
a sequence of edge contractions and vertex deletions, or vertex dissolutions and vertex deletions, respectively.

The problems {\sc Induced Minor} and {\sc Induced Topological Minor} are to test whether a graph $G$ contains a graph $H$ as an induced minor or induced topological minor, respectively.  Both problems are \NP-complete even when $G$ and $H$ are restricted to be line graphs~\cite{FKP12}. Hence, it is natural to study the computational complexity after excluding the graph $H$ from the input; when $H$ is fixed the problems are denoted {\sc $H$-Induced Minor} and {\sc $H$-Induced Topological Minor}. 

 The \emph{edge subdivision} operation replaces an edge $vw$ in a graph $G$ by a new vertex $u$ with edges $uv$ and $uw$.
A graph $H'$ is a  {\it subdivision} of a graph $H$ if $H$ can be modified into $H'$ by a sequence of edge subdivisions. 
We note that an edge subdivision is the ``dual'' operation of a vertex dissolution.
Hence, a graph $G$ contains a graph $H$ as an induced topological minor if and only if $G$ contains an induced subgraph that is isomorphic to a subdivision of $H$.
This alternative definition brings us to the following variant. Let $G$ be a graph in which we specify $k$ distinct vertices ordered as $u_1,\ldots,u_k$. Let $H$ be a $k$-vertex graph, the vertices of which are ordered as $x_1,\ldots,x_k$.
Then $G$ contains $H$ as an induced topological minor {\it anchored} in $u_1,\ldots,u_k$ if
$G$ contains an induced subgraph isomorphic to a subdivision of $H$ such that the isomorphism maps $u_i$ to $x_i$ for $i=1,\ldots, k$. 
The corresponding decision problem is called the {\sc Anchored Induced Topological Minor} problem.

\medskip
\noindent
{\bf Mutually Induced Paths.}
In the remainder of our paper we consider a slight generalization of the standard definition given in Section~\ref{s-intro}.
We say that paths $P_1,\ldots, P_k$ in a graph $G=(V,E)$  are {\it mutually induced} if 

\begin{itemize}
\item [(i)] each $P_i$ is an induced path in $G$;
\item [(ii)]  any distinct $P_i,P_j$ may only share vertices that are ends of both paths;
\item [(iii)]  no inner vertex $u$ of any $P_i$ is adjacent to a vertex $v$ of some $P_j$ for $j\neq i$, except when $v$ is an end-vertex of both $P_i$ and $P_{j}$.
\end{itemize}
This more general definition is exactly what we need later when detecting induced topological minors. In particular, this definition allows terminal pairs $(s_1,t_1),\ldots, (s_k,t_k)$ to have the following two properties:

\begin{itemize}
\item [1)]  for all  $i<j$, a terminal of $(s_i,t_i)$ may be adjacent to a terminal of $(s_j,t_j)$;
\item [2)] for all $i<j$, it holds that $0\leq |\{s_i,t_i\}\cap \{s_j,t_j\} |\leq 1$.
\end{itemize}
Property 2 means that terminal pairs may overlap but not coincide, that is, the set of terminal pairs is not a multiset.
This suffices for our purposes regarding detecting induced paths or cycles through specified vertices and detecting so-called anchored induced topological minors, as we will explain later.

For the remainder of the paper, we  assume that the input of 
{\sc Induced Disjoint Paths}
consists of a graph $G$ with a set of terminal pairs $(s_1,t_1),\ldots,(s_k,t_k)$ having Properties 1 and 2, and that the desired output is a set of paths $P_1,\ldots,P_k$ that are mutually induced, such that
$P_i$ has end-vertices $s_i$ and $t_i$ for $i=1,\ldots,k$. We say that $P_i$ is the {\it $s_it_i$-path} and also call it a  {\it solution path}.
We still call the subgraph of $G$ induced by the vertices of such paths an {\it induced linkage} and say that it forms a {\it solution for $G$}.

We observe that for general graphs, we can easily transform the variant with adjacent terminals or overlapping terminals pairs to the variant with neither adjacent terminals nor overlapping terminal pairs. First, adjacent terminals can be avoided by subdividing the edge between them. Second, a vertex $u$ representing $\ell\geq 2$ terminals can be replaced by $\ell$ new mutually non-adjacent vertices, each connected to all neighbors of $u$ via subdivided edges; even a situation with coinciding terminal pairs can be processed in this way. Third, let us recall that we may without loss of generality assume that every path $P_i$ is induced. However, these operations might not preserve claw-freeness, and hence we need different techniques in this paper (as we show later).

Our algorithm in Section~\ref{s-fpt} makes use of the aforementioned result of Robertson and Seymour on the {\sc $k$-Disjoint Paths} problem.

\begin{lemma}[\cite{RS95}]\label{t-RS}
For any fixed integer $k$, the {\sc $k$-Disjoint Paths} problem is solvable in $O(n^3)$ time for $n$-vertex graphs.
\end{lemma}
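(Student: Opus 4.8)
The plan is to follow the win/win, irrelevant-vertex paradigm that underpins the Graph Minors machinery, splitting into two regimes according to the treewidth of $G$. First I would (approximately) compute the treewidth of $G$ in time depending only on $k$ times a polynomial in $n$. If the treewidth turns out to be bounded by some function $g(k)$, then I would solve the instance by dynamic programming over a tree decomposition: for each bag, the only information that matters is the partial ``pattern'' recording which of the $k$ terminal paths currently cross the bag and how their dangling endpoints are to be paired up. Since the number of such patterns is bounded in terms of the bag width, and hence of $k$, the tables have size depending only on $k$, and propagating them over the $O(n)$ nodes of the decomposition disposes of the bounded-treewidth case in linear time. Equivalently, for fixed $k$ the existence of $k$ disjoint paths is expressible in monadic second-order logic, so one may simply invoke Courcelle's theorem here.

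If instead the treewidth is large, I would exploit structure. By the Excluded Grid Theorem, large treewidth forces a large grid, and hence a large wall, as a minor; by the Flat Wall Theorem one extracts, after deleting a bounded set $A$ of ``apex'' vertices (with $|A|$ depending only on $k$), a large \emph{flat} subwall $W$ that is planarly embedded with all its attachments on the outer face. The heart of the argument is then an \emph{irrelevant-vertex} claim: a vertex $v$ lying deep in the interior of $W$, separated from $A$ and from all terminals by sufficiently many concentric cycles of the wall, can be deleted without changing the answer. The intuition is that any solution path passing through $v$ can be rerouted around it within the nested planar cycles encircling $v$, because the flat region leaves enough room to reroute one path while keeping the others intact. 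Having located such an irrelevant $v$, I would delete it and recurse on $G-v$, which has one fewer vertex.

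Since the large-treewidth branch always removes at least one vertex while the bounded-treewidth branch terminates directly, the recursion has depth $O(n)$; with efficient subroutines for treewidth approximation and wall extraction, each iteration costs $O(n^2)$, and the whole procedure runs in $O(n^3)$ time for each fixed $k$, with a hidden constant that depends astronomically on $k$.

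The hard part will be the irrelevant-vertex step. Proving that the central vertex of a large flat wall may be deleted requires the full strength of the Flat Wall Theorem together with a delicate rerouting argument that ``unwinds'' a linkage through the concentric cycles surrounding $v$, and it is this structural core, rather than the comparatively routine dynamic programming of the bounded-treewidth case, that makes the result so deep. For the purposes of this paper, however, I would treat the statement as a black box, exactly as it is cited.
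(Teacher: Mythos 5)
The paper offers no proof of this lemma: it is imported verbatim as a black box from Robertson and Seymour's Graph Minors XIII, which is exactly how you conclude your proposal should be handled. Your sketch of what lies inside that black box --- the treewidth win/win, dynamic programming (or Courcelle) in the bounded-treewidth regime, and the irrelevant-vertex argument via a flat wall in the large-treewidth regime, with the $O(n^3)$ bound coming from $O(n)$ deletion rounds at $O(n^2)$ each --- is a faithful high-level account of the known proof, so there is nothing to correct relative to the paper's treatment.
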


We also need the following terminology.
Let $G=(V,E)$ be a graph with terminal pairs $(s_1,t_1),\ldots, (s_k,t_k)$. By Property 2, a vertex $v$ can be a terminal in more than one terminal pair, e.g., $v=s_i$ and
$v=s_j$ is possible for some $i\neq j$. For clarity reasons, we will view $s_i$ and $s_j$ as two different terminals {\it placed on} vertex $v$.  
We then say that a vertex $u\in V$ {\it represents} terminal $s_i$ or $t_i$ if $u=s_i$ or $u=t_i$, respectively. We call such a vertex a {\it terminal vertex}; the other vertices of $G$ are called {\it non-terminal vertices}. We let $T_u$ denote the set of terminals represented by $u$ and observe that  $|T_u|\geq 2$ is possible.
We call two terminals that belong to the same terminal pair {\it partners}. 
We note that two partners may be represented by the same vertex, that is, $s_i$ and $t_i$ may belong to $T_u$ for some $u\in V$.

For our algorithm to work we first need to apply certain preprocessing operations. To this end, we introduce the following notation. Let $G$ be a  graph that together with terminal pairs $(s_1,t_1),\ldots,(s_k,t_k)$ forms an instance $I_1$ of the {\sc Induced Disjoint Paths} problem. We say that  an instance $I_2$ that consists of a graph $G'$ with terminal pairs $(s_1',t_1'),\ldots, (s_{k'}',t_{k'}')$  is {\it equivalent} to the first instance if the following three conditions hold:
\begin{itemize}
\item [(i)] $k'\leq k$;
\item [(ii)] $|V(G')|\leq |V(G)|$; 
\item [(iii)] $I_2$ is a  {\tt Yes}-instance if and only if $I_1$ is a {\tt Yes}-instance.
\end{itemize}
We say that an operation that transforms an instance of {\sc Induced Disjoint Paths} into a new instance {\it preserves the solution} if  the new instance is equivalent to the original instance. Most operations in our algorithm will be deletions of non-terminal vertices. In such cases preserving the solution just comes down to checking if the new instance still has a solution whenever the original instance has one.

In our algorithm, we sometimes have to solve the 
{\sc Induced Disjoint Paths}
problem on a graph that contain no terminals as a subproblem. We consider such instances
{\tt Yes}-instances (that have an empty solution). 

\medskip
\noindent
{\bf Graph Classes.}
The graph $K_{1,k}$ denotes the star with $k$ rays.
In particular, the graph $K_{1,3}=(\{a_1,a_2,a_3,b\},\{a_1b,a_2b,a_3b\})$ is called a {\it claw}. 
A graph is {\it $K_{1,k}$-free} if it has no induced
subgraph isomorphic to $K_{1,k}$. If $k=3$, then we usually call such a graph {\it claw-free}.

The {\it line graph} of a graph $G$ with edges $e_1,\ldots,e_p$ is the graph
$L(G)$ with vertices $u_1,\ldots,u_p$ such that there is an edge between any
two vertices $u_i$ and $u_j$ if and only if $e_i$ and $e_j$ share one end
vertex in $G$. Every line graph is claw-free.
We call $G$ the {\it preimage} of $L(G)$. 
It is well known that every connected line
graph except $K_3$ has a unique preimage (see \eg\cite{H}).

As a subroutine of our algorithm in Section~\ref{s-fpt}, we must compute the preimage of a line graph. For doing this we can use the linear-time algorithm of Roussopoulos~\cite{Ro73}.

\begin{lemma}[\cite{Ro73}]\label{l-preimage}
There exist an  $O(\max\{m,n\})$ algorithm for determining the preimage from a line graph $G$ on $n$ vertices and $m$ edges.
\end{lemma}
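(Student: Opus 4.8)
The plan is to reconstruct the preimage $H$ (with $L(H)=G$) from the \emph{Krausz clique partition} of $G$: a partition of $E(G)$ into complete subgraphs (cliques) such that every vertex of $G$ lies in at most two of them. Such a partition exists precisely because $G$ is a line graph, and it encodes $H$ directly. Indeed, each clique of the partition corresponds to a vertex of $H$, and each vertex $w$ of $G$ corresponds to an edge of $H$ whose endpoints are the (at most two) cliques containing $w$; a vertex lying in only one clique becomes an edge of $H$ to a fresh vertex of degree one. Given the partition, writing down $H$ and verifying $L(H)=G$ takes $O(\max\{m,n\})$ time, so the whole difficulty is to compute the partition within this bound.

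First I would reduce to the connected case by processing each connected component of $G$ separately, and dispose of the degenerate inputs (isolated vertices, single edges, and $G=K_3$) by a direct look-up, adopting the convention that $K_3$ is assigned a fixed preimage. For a connected line graph other than $K_3$, the preimage is unique (as recalled above), and so is its Krausz partition; the task is to reconstruct it. The key local observation is that the neighbourhood of any vertex $w$ is covered by the two cliques through $w$, and hence decomposes into at most two cliques whose only overlaps are forced by triangles. To expose this decomposition I would first search for an \emph{anchor}: a vertex $w$ possessing two non-adjacent neighbours $x,y$. Such a $w$ necessarily lies in two distinct cliques, $x$ and $y$ lie in different ones, and the clique membership of every remaining neighbour of $w$ is then fixed by its adjacency to $x$ and to $y$. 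Starting from the anchor I would propagate the clique labelling throughout the component by a single graph search, at each newly reached vertex using its already-labelled neighbours to determine the two cliques passing through it. If no anchor exists at all, then $G$ is a complete graph, in which case it is the line graph of a star and I would output $K_{1,n}$ directly.

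The main obstacle is the ambiguity created by triangles: a triangle $\{x,y,z\}$ in $G$ can arise either because the three corresponding edges of $H$ share a common endpoint (a \emph{star} triangle, to be merged into one clique) or because they form a triangle in $H$ (a \emph{proper} triangle, to be split across three cliques). Locally these cases are indistinguishable, which is exactly the $K_3$ versus $K_{1,3}$ ambiguity underlying the uniqueness exception. The anchoring step resolves this globally: once one clique boundary is fixed by a pair of non-adjacent vertices, the split of every triangle along the search is forced by consistency. The second difficulty is keeping everything within $O(\max\{m,n\})$ time. To do so I would avoid recomputing common neighbourhoods from scratch: each vertex is assigned to its two cliques when it is first reached, each edge of $G$ is inspected only a constant number of times during propagation, and the final construction of $H$ scans the cliques and vertices once. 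Charging the adjacency tests to the edges of $G$ and the clique look-ups to the vertices then yields the claimed linear bound, provided the adjacency tests are implemented through an incidence structure rather than an adjacency matrix.
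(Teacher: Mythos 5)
The paper does not prove this lemma at all: it is imported verbatim from Roussopoulos's paper \cite{Ro73}, so there is no in-paper argument to compare yours against. Your sketch does follow the strategy of the cited algorithm (and of Lehot's variant): recover the Krausz partition of $E(G)$ into cliques with every vertex in at most two of them, read off $H$ as the intersection graph of the cliques, and handle $K_3$ and complete graphs separately. That is the right route, and your identification of the star-triangle versus proper-triangle ambiguity as the crux is also correct.

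However, the anchor step as you state it has a genuine gap. You claim that once a vertex $w$ with non-adjacent neighbours $x,y$ is fixed, ``the clique membership of every remaining neighbour of $w$ is then fixed by its adjacency to $x$ and to $y$.'' This fails exactly for a neighbour $z$ of $w$ that is adjacent to \emph{both} $x$ and $y$: if $w=ab$, $x=ac$, $y=bd$ in the preimage, then $z$ could be the edge $ad$ (same clique as $x$) or the edge $bc$ (same clique as $y$), and the induced subgraph on $\{w,x,y,z\}$ is the same diamond in both cases. So the partition of $N(w)$ is not determined by adjacency to the single pair $x,y$; one must either $2$-colour the complement of $G[N(w)]$ (whose connectivity is not guaranteed, and whose explicit construction threatens the linear time bound) or invoke the odd/even-triangle test via common-neighbour counts. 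Your subsequent appeal to the split being ``forced by consistency'' during the search does not say how this initial choice is made or detected when it is wrong, and this disambiguation is precisely the technical heart of the $O(\max\{m,n\})$ algorithm. Relatedly, the closing claim that adjacency tests can be charged to edges ``through an incidence structure'' is asserted rather than argued; constant-time adjacency queries without an adjacency matrix are exactly the kind of detail on which the linear bound lives or dies. As a description of the known algorithm your outline is serviceable, but as a proof it leaves the one non-routine step unproved.
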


We also need the following lemma (see e.g. Ryj\'{a}\v{c}ek~\cite{Ry97}, who showed that a graph as in the lemma statement even has a triangle-free preimage).

\begin{lemma}\label{l-linegraph}
Every graph in which the neighborhood of every vertex induces a disjoint union of at most two cliques is a line graph.
\end{lemma}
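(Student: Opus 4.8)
The plan is to prove the lemma via an explicit Krausz-style clique partition, which is precisely the structure that certifies a graph to be a line graph: if the edge set of $G$ can be partitioned into cliques so that every vertex lies in at most two of them, then $G$ is a line graph, and a preimage is obtained by taking these cliques as vertices of $H$, turning each vertex of $G$ into an edge of $H$ joining the (at most two) cliques through it (adding a pendant vertex when a vertex of $G$ lies in only one clique). So the whole task reduces to manufacturing such a partition from the local hypothesis that each $N(v)$ is a disjoint union of at most two cliques.

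First I would set up a local relation on edges. For a vertex $v$, call two incident edges $vu$ and $vw$ \emph{linked at $v$} if $uw\in E$; since $G[N(v)]$ is a disjoint union of cliques, two neighbors of $v$ are adjacent exactly when they lie in the same clique of $N(v)$, so ``linked at $v$'' is an equivalence relation on the edges at $v$ whose classes (the \emph{bundles} at $v$) correspond to the cliques of $N(v)$. By hypothesis there are at most two bundles at each vertex. I then take the transitive closure over all vertices: declare $e\sim f$ whenever $e,f$ are joined by a chain $e=e_0,e_1,\dots,e_m=f$ of edges in which consecutive edges share an endpoint and are linked there. Because all edges of a single bundle at $v$ are pairwise linked, each bundle lies inside one $\sim$-class, so every vertex meets at most two $\sim$-classes. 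The equivalence classes automatically partition $E(G)$.

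The main obstacle, and the heart of the proof, is showing that each class $C$ is in fact the edge set of a clique; this is exactly where the ``no induced $P_3$ in any neighborhood'' part of the hypothesis is used. I would prove by induction along a linking chain that the set of endpoints of $e_0,\dots,e_i$ is always a clique. The base case is a single edge. For the inductive step, suppose the endpoints of $e_0,\dots,e_{m-1}$ form a clique $K$ and $e_m=vz$ is linked to $e_{m-1}$ at $v\in K$, with $e_{m-1}=vw$, so $wz\in E$; the only possibly new vertex is $z$. For any $x\in K\setminus\{v,w\}$ we have $xw\in E$ and $wz\in E$, so $x,w,z$ lie in $N(v)$ with $x\!-\!w\!-\!z$ a path; since $G[N(v)]$ contains no induced $P_3$, this forces $xz\in E$. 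Hence $z$ is adjacent to all of $K$, and $K\cup\{z\}$ is again a clique. Consequently the set $S$ of all endpoints of the edges of $C$ is a clique, and conversely every edge within $S$ is linked to the others and therefore lies in $C$, so $C=E(G[S])$.

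Having established that the classes are exactly edge sets of cliques, that they partition $E(G)$, and that each vertex lies in at most two of them, I have a valid clique partition of the required type. Finishing is then routine: I construct $H$ with one vertex per class (plus one fresh pendant vertex for each vertex of $G$ lying in only one class), and map each vertex $v$ of $G$ to the edge of $H$ joining its (at most two) classes. Two vertices of $G$ are adjacent precisely when the unique class containing the joining edge is shared, i.e.\ precisely when the corresponding edges of $H$ meet, so $L(H)=G$ and $G$ is a line graph. I would remark that this also explains Ryj\'a\v{c}ek's stronger conclusion, since the ``at most two cliques'' bound means every vertex of $G$ becomes an edge of $H$ with distinct endpoints and every class supplies at most the edges within one clique, leaving room to choose $H$ triangle-free.
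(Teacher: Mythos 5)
Your proof is correct. The paper does not actually prove Lemma~\ref{l-linegraph}; it only cites Ryj\'a\v{c}ek~\cite{Ry97}, so there is no in-paper argument to compare against. What you supply is the standard Krausz-type certificate: you build the clique partition of $E(G)$ by taking the transitive closure of the local ``linked at $v$'' relation, use the $P_3$-freeness of each neighborhood to show every class spans a clique (the chain induction with $x$--$w$--$z$ in $N(v)$ is exactly the right step), use the ``at most two cliques per neighborhood'' hypothesis to bound each vertex by two classes, and then read off the preimage $H$. All the verifications that matter --- that ``linked at $v$'' is an equivalence relation because $G[N(v)]$ is a disjoint union of cliques, that distinct classes share at most one vertex because their edge sets are disjoint, that every edge inside the vertex set of a class belongs to that class --- are present or immediate from what you wrote. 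Two trivial loose ends: an isolated vertex of $G$ lies in zero classes and needs its own fresh edge in $H$ (your phrasing only covers the ``one class'' case), and your closing remark that this construction ``leaves room to choose $H$ triangle-free'' is not actually justified by the argument given (Ryj\'a\v{c}ek's triangle-free preimage requires more work); but the latter is an aside and neither point affects the validity of the proof of the lemma as stated.
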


A graph is an {\it interval graph} if intervals of the real line can be
associated with its vertices such that two vertices are adjacent if and only
if their corresponding intervals 
intersect. An interval graph is {\it proper} if it has an interval representation in which no interval is properly contained in any other interval.
Analogously, we can define the class of {\it circular-arc graphs} and {\it proper circular-arc graphs} by considering
a set of intervals 
(arcs) on the circle instead of a real line.
Proper interval graphs and proper circular-arc graphs are also known as linear interval graphs and circular interval graphs, respectively (cf.~\cite{FKLP12}). 
Note that proper circular-arc graphs are claw-free. 
We will use the following lemma due to Deng, Hell and Huang~\cite{DHH96}.

\begin{lemma}[\cite{DHH96}]\label{l-deng}
Proper circular-arc graphs can be recognized 
in linear time. A corresponding interval representation of such graphs can be constructed in linear time as well.
\end{lemma}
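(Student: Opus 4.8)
The plan is to reduce the problem to computing a suitable cyclic vertex ordering, from which both recognition and the arc representation fall out. The structural fact I would rely on is the circular analogue of the straight-enumeration characterization of proper interval graphs: a connected graph $G$ is a proper circular-arc graph if and only if its vertices admit a \emph{round enumeration}, that is, a cyclic ordering $v_0,v_1,\ldots,v_{n-1}$ in which every closed neighborhood $N[v_i]$ is a circularly consecutive block (an arc of the ordering) and, in addition, the enumeration is \emph{proper}, meaning no such block is strictly contained in another. First I would dispose of the easy cases. A single clique is trivially representable, and a disconnected graph is a proper circular-arc graph precisely when each of its components is a proper interval graph: if some component occupied the whole circle then every other arc would meet it and the graph would be connected, so each component occupies a proper sub-region of the circle and, after cutting outside that region, becomes a proper interval graph. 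Hence I may assume $G$ is connected and non-complete, and recognition reduces to deciding whether a round enumeration exists.

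Next I would set up the two linear-time subroutines that bracket the hard part. The first is a collapsing of true twins: by Lemma~\ref{l-twinsets} the twin sets are computable in $O(n+m)$ time, and in any round enumeration the members of a twin set must occupy one consecutive block, so it suffices to find a round enumeration of the quotient graph and then re-expand each class into a block. The second is verification: given a candidate cyclic ordering, I would record for each vertex only the two endpoints $l(i),r(i)$ of its candidate arc and, in a single linear sweep, check that each $N[v_i]$ is exactly the block between these endpoints and that $l(i)$ and $r(i)$ vary monotonically around the circle (the properness condition). If this sweep succeeds, the enumeration is valid; converting it into a representation is then immediate, placing one arc per twin-class block around the circle so that each arc spans exactly its neighborhood block, which by properness yields overlapping arcs with no proper containment.

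The main obstacle is producing a correct candidate ordering in linear time, and the source of difficulty is the circularity itself. For proper interval graphs, a lexicographic BFS supplies a canonical starting vertex and the straight enumeration is essentially forced; on the circle there is no natural starting point, and a locally valid extension may fail to close up after wrapping around, so a naive scheme risks either missing valid representations or trying all starting configurations at superlinear cost. I would resolve this by anchoring: the arc of a single vertex $v_0$ already pins a segment of the circle, so cutting the circle at an endpoint of $v_0$'s arc linearizes the remaining instance into a proper-interval recognition problem, whose straight enumeration (obtained in linear time by the standard lex-BFS method) determines the round enumeration up to the constant-size choice of how the clique of arcs crossing the cut reconnects at the two ends. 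Checking those finitely many reconnection patterns, each followed by the linear verification sweep above, keeps the whole procedure within $O(n+m)$ time. The delicate point, and the part I expect to require the most care, is precisely arguing that the set of reconnection patterns to be tested is of constant size and that each test is linear, so that anchoring does not secretly reintroduce a factor of $n$.
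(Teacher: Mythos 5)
The paper does not prove this statement at all: Lemma~\ref{l-deng} is imported as a black box from Deng, Hell and Huang~\cite{DHH96}, so there is no in-paper argument to compare yours against. Judged on its own terms, your sketch captures a reasonable high-level picture (round enumerations, twin contraction via Lemma~\ref{l-twinsets}, a final linear verification sweep), but it has a genuine gap exactly where you flag the ``delicate point,'' and the gap is more serious than a missing constant-counting argument.

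The anchoring step is circular. To ``cut the circle at an endpoint of $v_0$'s arc'' you must already know which neighbours of $v_0$ have arcs crossing that endpoint, i.e.\ you must already possess part of the representation you are trying to construct; from the abstract graph alone, $N[v_0]$ splits into the two crossing cliques in a way that is not given to you. Moreover, even granting the cut, the set $K$ of arcs crossing it is a clique of size up to $\Theta(n)$, and your ``reconnection patterns'' are orderings of $K$ interleaved with the two ends of the straight enumeration of $G-K$; nothing in your argument shows this choice space has constant size, and naively it does not. What actually makes the step tractable is that, after twin contraction, the positions of the vertices of $K$ are \emph{forced} by their neighbourhoods outside $K$ (a consecutivity/Robinson-type forcing argument), and establishing and exploiting that forcing within $O(n+m)$ time is precisely the technical core of~\cite{DHH96}, who in fact proceed via round orientations of local tournaments rather than by cutting and re-gluing. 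As written, your proposal defers the entire hard part of the lemma to an unproved claim, so it does not yet constitute a proof.
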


We also need the following result due to Golovach et al.~\cite{GPV13}.\footnote{It should be noted that our algorithm for claw-free graphs
will only apply 
Lemma~\ref{l-circ} for proper circular-arc graphs, and for those graphs we already showed that {\sc Induced Disjoint Paths} is linear-time
solvable in the extended abstract of this paper~\cite{GPV12b}.}

\begin{lemma}[\cite{GPV13}]\label{l-circ}
The {\sc Induced Disjoint Paths} problem can be solved in $O(n+m+k)$ time on circular-arc graphs on $n$ vertices and $m$
edges with $k$ terminal pairs.
\end{lemma}

\medskip
\noindent
{\bf Structure of Claw-Free Graphs.}
Chudnovsky and Seymour have given a structural characterization for claw-free graphs, the proof of which can be found in a series of seven papers called {\it Claw-free graphs I} through {\it VII}. We refer to their survey~\cite{CS05} for a summary.
Hermelin~\etal\cite{HMVW11} gave an algorithmic version of their result. 
This version plays an important role in the proof of our main result in Section~\ref{s-fpt}. In order to state it we need some additional terminology. 

Two adjacent vertices $u$ and $v$ in a graph $G$ are called {\it (true) twins} if they share the same neighbors, \ie 
$N[u] = N[v]$. 
The equivalence classes of the twin relation are called \emph{twin sets}.

The following result is implicit in a paper by Habib, Paul and Viennot~\cite{HPV1998}.

\begin{lemma}[\cite{HPV1998}] \label{l-twinsets}
The problem of detecting all twin sets in an $n$-vertex graph with $m$ edges is solvable in $O(n+m)$ time.
\end{lemma}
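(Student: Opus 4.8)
The plan is to compute, by partition refinement, the equivalence classes of the twin relation, and to observe that the whole computation can be carried out in $O(n+m)$ time. Recall that there are two flavours of twins: \emph{true} twins $u,v$ with $N[u]=N[v]$ (which forces $uv\in E$, since $u\in N[u]=N[v]$ while $u\notin N(u)$), and \emph{false} twins $u,v$ with $N(u)=N(v)$ (which forces $uv\notin E$, since $u\notin N(u)=N(v)$). Each flavour induces an equivalence relation on $V$, and its non-singleton classes are exactly the twin sets we must output; so it suffices to produce, for each relation, the partition of $V$ into its classes.

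For false twins I would start with the trivial partition $\{V\}$ and refine it once for every vertex $w$, using the open neighbourhood $N(w)$ as a pivot: each current class $C$ is replaced by the two parts $C\cap N(w)$ and $C\setminus N(w)$, discarding empty parts. After all $n$ pivots have been processed, two vertices $u,v$ lie in the same final class if and only if for every $w$ we have $u\in N(w)\Leftrightarrow v\in N(w)$, i.e.~$uw\in E\Leftrightarrow vw\in E$ for all $w$; this is precisely $N(u)=N(v)$. Hence the final partition is exactly the false-twin partition, whose non-singleton classes (each an independent set) are the false-twin sets. For true twins I would run the identical procedure but pivot on the closed neighbourhoods $N[w]$; since $u\in N[w]\Leftrightarrow w\in N[u]$, agreeing on membership in every $N[w]$ is equivalent to $N[u]=N[v]$, so this pass yields exactly the true-twin partition, whose non-singleton classes are cliques.

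For the running time I would use the standard partition-refinement data structure in which the current partition is stored as a collection of doubly linked lists together with, for each element, a pointer to the class containing it. Refining by a single pivot set $S$ is performed by scanning $S$ once: each scanned element is detached from its class and appended to a freshly created ``split-off'' copy of that class, and a final sweep over the touched classes either keeps the split or undoes it when one side is empty. This costs $O(|S|)$ time per pivot, so the false-twin pass costs $O\!\big(n+\sum_{w}|N(w)|\big)=O(n+m)$ and the true-twin pass costs $O\!\big(n+\sum_{w}|N[w]|\big)=O\!\big(n+(2m+n)\big)=O(n+m)$, including the $O(n)$ needed to initialise the structure and to read off the classes.

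The main obstacle is the amortised analysis of the refinement: one must argue that maintaining the element-to-class pointers, creating the split-off classes, and cleaning up empty parts all stay within $O(|S|)$ per pivot, so that the total work telescopes to $O(\sum_{w}|N(w)|)$ rather than $O(n)$ per pivot (which would only give an $O(n^2)$ bound). This is exactly the bookkeeping behind Hopcroft-style refinement and the approach of Habib \etal\cite{HPV1998}; the remaining verifications---that empty parts are handled correctly, that singleton classes require no further work, and that both passes can share the same routine---are routine, and I would omit their details.
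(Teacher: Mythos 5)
Your proof is correct. The paper itself gives no proof of this lemma --- it is imported as a black box from Habib \etal\cite{HPV1998} (``implicit in a paper by Habib \etal'') --- and the partition-refinement argument you give, pivoting once on each (closed) neighbourhood and charging $O(|S|)$ per pivot so that the total cost is $O(n+\sum_w |N[w]|)=O(n+m)$, is precisely the technique of that cited reference, so you have in effect supplied the proof the paper delegates. Two minor remarks: the paper's definition of twins covers only \emph{true} twins (adjacent vertices with $N[u]=N[v]$), so your false-twin pass, while correct, is not needed here; and the amortisation you worry about in the last paragraph is in fact trivial in this application, since each neighbourhood is used as a pivot exactly once and no Hopcroft-style ``smaller half'' bookkeeping is required --- one only has to avoid touching classes not hit by the pivot, which your data structure already guarantees.
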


Two disjoint cliques $A$ and $B$ form a \emph{proper W-join} in a graph $G$ if 
$|A|\geq 2$, $|B|\geq 2$,
every vertex $v\in V(G)\setminus(A \cup B)$ is
either adjacent to all vertices of $A$ or to no vertex of $A$, every vertex in $A$ is adjacent to at least one vertex in $B$ and non-adjacent to at least one vertex in $B$, and the above also holds with $A$ and $B$ reversed.

We need the following result by King and Reed~\cite{KR08}.

\begin{lemma}[\cite{KR08}]\label{l-properwjoin}
The problem of detecting a  proper W-join in an $n$-vertex graph with $m$ edges is solvable in $O(n^2m)$ time.
\end{lemma}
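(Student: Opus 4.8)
The plan is to reduce the search for a proper W-join to a small number of guessed \emph{anchor} vertices together with a deterministic forcing procedure, and then to verify each candidate directly against the four defining conditions. The engine of the forcing is the following structural observation, which I would establish first. Suppose $(A,B)$ is a proper W-join and $u,w$ are two distinct vertices of $A$. By the homogeneity condition, every vertex of $V(G)\setminus(A\cup B)$ is adjacent to all of $A$ or to none of $A$, hence cannot distinguish $u$ from $w$; and since $A$ is a clique, no vertex of $A$ distinguishes them either. Therefore every vertex adjacent to exactly one of $u,w$ must lie in $B$, and symmetrically with the roles of $A$ and $B$ reversed. This yields a sound rule: given a set of vertices already known to lie in $A$, any vertex that distinguishes two of them is forced into $B$, and conversely.

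Building on this, I would guess two vertices $a_1,a_2$ intended to lie in $A$ (there are $O(n^2)$ such guesses) and iterate the forcing rule to a fixed point: vertices distinguishing two current members of $A$ are moved into $B$, vertices distinguishing two current members of $B$ are moved into $A$, and so on, each sweep costing $O(m)$ time. Once the process stabilises I would close the candidate sets under the remaining constraints (each side must be a clique and each outside vertex must be homogeneous to both sides), discard vertices that violate these, and finally test the four W-join conditions on the resulting pair $(A,B)$ in $O(n+m)$ time. Soundness is immediate from the forcing rule; the crux is completeness, namely that for a correctly guessed anchor pair the fixed point recovers the genuine $A$ and $B$ rather than an under- or over-populated set.

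The main obstacle, and where the running-time budget bites, is precisely this completeness together with the $O(n^2m)$ bound. Two difficulties must be handled. First, the forcing bootstraps only if the two anchors of $A$ are actually distinguished by some vertex of $B$, whose newly found members are in turn distinguished by vertices of $A$, and so on; I would need to argue that the non-triviality conditions (every $A$-vertex has both a neighbour and a non-neighbour in $B$, and symmetrically) propagate the forcing across the whole join, while carefully treating the vertices adjacent to \emph{both} sides or to \emph{neither}, which the single rule does not immediately classify. Second, the process degenerates exactly when one side, say $A$, consists of mutually indistinguishable vertices; such vertices share the same closed neighbourhood and hence form a true twin set, so this case can be located directly by the twin-set detection of Lemma~\ref{l-twinsets} in $O(n+m)$ time and handled separately. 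Balancing the number of seeds against the per-seed work so that the total stays within $O(n^2m)$ --- for instance $O(n^2)$ anchor guesses each costing $O(m)$, or $O(m)$ edge-based seeds each costing $O(n^2)$ --- is the final bookkeeping step, and fitting the degenerate and ``ambiguous middle'' cases into this budget is where I expect the real work to lie.
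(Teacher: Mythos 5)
You cannot be faulted for not matching the paper here: the paper gives no proof of this lemma at all. It is imported verbatim from King and Reed~\cite{KR08}, so the only thing to assess is whether your sketch would stand on its own. Its general shape --- seed a candidate with a guessed pair of anchors, propagate membership by the observation that any vertex distinguishing two vertices of $A$ must lie in $B$ (and vice versa), then verify --- is indeed the standard ``grow a homogeneous pair from seeds'' technique on which the Everett--Klein--Reed and King--Reed algorithms are built, and your forcing rule is sound. One of your two worries is also easily dispatched: the fully degenerate case cannot occur, since if all vertices of $A$ were pairwise indistinguishable then every vertex of $B$ would be homogeneous to $A$, contradicting the fourth condition of a proper W-join; hence a distinguishable anchor pair in $A$ always exists among your $O(n^2)$ guesses.

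The genuine gap is the one you name yourself: completeness of the fixed point. The forcing can stall at a pair $(A',B')\subseteq(A,B)$ that is not a proper W-join even though $(A,B)$ is. For instance, if the two anchors are distinguished by exactly one vertex $b$, then $B'=\{b\}$ admits no pair to distinguish, the process halts with $|B'|=1$, and the first condition fails; similarly, if every distinguisher of the anchors happens to be adjacent to the same anchor, that anchor ends up with no non-neighbour in $B'$ and the third condition fails. Your proposed repair --- ``close the candidate sets under the remaining constraints and discard violators'' --- only ever removes vertices, so it cannot rescue an under-populated candidate; what is actually needed is a proof that for \emph{every} proper W-join in $G$ there exists \emph{some} seed choice whose fixed point passes verification (possibly yielding a different, smaller proper W-join, which suffices for detection). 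That existential claim is the entire content of the lemma and is not established by your sketch. Since the result is a black box from~\cite{KR08} in this paper, the appropriate resolution is to cite it rather than reprove it; if you do want a self-contained proof, the missing completeness argument is where the work lies, exactly as you predicted.
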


A {\it hypergraph} is  a pair $R=(V_R,E_R)$ where $V_R$ is a set of elements called {\it vertices} and $E_R$ is a collection of subsets of $V_R$ called
{\it hyperedges}. Two hyperedges $e_1$ and $e_2$ are {\it parallel} if they contain the same vertices of $V_R$. Graphs can be seen as hypergraphs in which all 
hyperedges have size two.

A \emph{strip-structure} $(R, \{(J_{e},Z_{e}) \mid e \in E(R)\})$ for a claw-free graph $G$ is a hypergraph $R$, with possibly 
parallel and 
empty
hyperedges, and a set of tuples $(J_{e},Z_{e})$ for each $e \in E(R)$ called \emph{strips} such that
\begin{itemize}
\item [$\bullet$] $J_{e}$ is a claw-free graph and $Z_{e} \subseteq V(J_{e})$,
\item [$\bullet$] $\{V(J_{e}) \setminus Z_{e} \mid e \in E(R)\}$ is a partition of $V(G)$ and each $V(J_{e}) \setminus Z_{e}$ is nonempty,
\item [$\bullet$] $J_{e}[V(J_{e}) \setminus Z_{e}]$ equals $G[V(J_{e}) \setminus Z_{e}]$,
\item [$\bullet$] each $v \in e$ corresponds to a unique $z_{v} \in Z_{e}$ and vice versa.
\item [$\bullet$] for each $v \in V(R)$, the set $C_{v} := \bigcup_{z_{v} \in Z_{e} : v \in e} N_{J_{e}}(z_{v})$ induces a clique in $G$,
\item [$\bullet$] each edge of $G$ is either in $G[C_{v}]$ for some $v \in V(R)$ or in $J_{e}[V(J_{e}) \setminus Z_{e}]$ for some $e \in E(R)$.
\end{itemize}
Note that a vertex $v\in V(R)$ may be in more than one hyperedge of $R$, say $v$ belongs to $e_1,\ldots,e_r$ for some $r\geq  1$. Then $v$ corresponds to some unique vertex $z_v(e_i)$ in every $Z_{e_i}$, and the union of the neighbors of these $z$-vertices form a clique in $G$. 
When there is no confusion about $e$, we drop the subscript {\it e} and just talk about strips $(J,Z)$. 

A strip $(J,Z)$ is called a \emph{stripe} if the vertices of $Z$ are pairwise nonadjacent and any vertex in $V(J)\setminus Z$ is adjacent to at most one vertex of $Z$. A strip $(J,Z)$ is called a \emph{spot} if $J$ is a three-vertex path and $Z$ consists of both ends of this path.

We can now state the required lemma, which is easily derived from Lemma~C.20 in~\cite{HermelinMvLW-arxiv} or Theorem~1 in~\cite{HMVW11}.

\begin{lemma}[\cite{HMVW11,HermelinMvLW-arxiv}]\label{l-structure}
Let $G$ be a connected claw-free graph, such that G does not admit twins or proper W-joins and $\alpha(G) > 4$. Then either
\begin{itemize}
\item [1.] $G$ is a proper circular-arc graph, or 
\item [2.] $G$ admits a strip-structure such that each strip $(J,Z)$ either is
\begin{itemize}
\item [(a)] a spot, or
\item [(b)] a stripe with $|Z|=1$ and $J$ is proper circular-arc  or has $\alpha(J)\leq 3$, or
\item [(c)] a stripe with $|Z|=2$, and $J$ is proper interval  or has $\alpha(J)\leq 4$. 
\end{itemize}
\end{itemize}
Moreover, it is possible to distinguish the cases and to find the strip-structure in polynomial time.
\end{lemma}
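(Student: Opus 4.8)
The plan is to obtain this statement as a direct specialization of the algorithmic structure theorem for claw-free graphs of Hermelin \etaln~\cite{HMVW11,HermelinMvLW-arxiv}, rather than to reprove any part of the Chudnovsky--Seymour decomposition. First I would invoke that theorem in the form stated as Theorem~1 in~\cite{HMVW11} (equivalently Lemma~C.20 in~\cite{HermelinMvLW-arxiv}): for a connected claw-free graph $G$ that admits neither twins nor a proper W-join, it produces, in polynomial time, either a certificate that $G$ belongs to one of a fixed list of \emph{basic} claw-free classes, or a strip-structure $(R,\{(J_e,Z_e)\})$ whose strips are drawn from a bounded catalogue of strip types. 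The hypotheses ``no twins, no proper W-join'' in our statement are exactly the normalisation assumptions under which this theorem applies, so no preprocessing is needed here.

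The second step is to use the extra hypothesis $\alpha(G)>4$ to prune both the list of basic classes and the strip catalogue down to the cases listed in the lemma. Among the basic classes occurring in the decomposition, all of the ``dense'' ones---antiprismatic graphs, the graphs derived from the icosahedron, and the other sporadic families---have independence number bounded by a small constant, so they are ruled out by $\alpha(G)>4$; the only basic class that can realise an arbitrarily large independent set is the class of proper circular-arc graphs, which gives Case~1. In the decomposition case, I would argue that every strip $(J_e,Z_e)$ must be of one of the three permitted forms. Spots are always allowed. For a genuine stripe, the value of $|Z_e|$ records how the strip attaches to the rest of $G$, and since $G$ is twin- and W-join-free one checks that $|Z_e|\le 2$ and that the ends of the stripe attach to disjoint cliques. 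A stripe that is neither a proper interval nor a proper circular-arc graph necessarily comes from one of the dense basic families, and hence satisfies the independence bound $\alpha(J_e)\le 3$ when $|Z_e|=1$, or $\alpha(J_e)\le 4$ when $|Z_e|=2$; these are precisely Cases~2(b) and~2(c), with the proper-interval/proper-circular-arc alternatives covering the remaining stripes.

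For the algorithmic claim, the decomposition itself is produced in polynomial time by the cited theorem, so it remains only to certify which case we are in and to verify the strip types. Recognising whether $G$ (or a given stripe $J_e$) is proper circular-arc or proper interval, and extracting a representation, is done in linear time by Lemma~\ref{l-deng}; the independence number of a claw-free graph can be computed in polynomial time, so the bounds $\alpha(J_e)\le 3$ and $\alpha(J_e)\le 4$ are checkable. Combining these verifications with the polynomial-time decomposition yields the ``moreover'' part.

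I expect the delicate point to be the bookkeeping in the second paragraph: matching each strip type from the general Chudnovsky--Seymour/Hermelin \etaln catalogue to exactly one of the forms (a)--(c), and in particular confirming that the attachment size $|Z_e|\in\{1,2\}$ lines up with the independence-number bounds $\alpha(J_e)\le 3$ and $\alpha(J_e)\le 4$ rather than some larger constants. This requires tracking how a large independent set of $G$ distributes across the strips---essentially a superadditivity argument for $\alpha$ over the strip-structure---to be sure that no stripe of large independence number with a large attachment survives under $\alpha(G)>4$. Everything else is a direct transcription of the cited results.
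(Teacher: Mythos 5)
The paper gives no proof of this lemma at all --- it is stated purely as a citation, with the remark that it ``is easily derived from Lemma~C.20 in~\cite{HermelinMvLW-arxiv} or Theorem~1 in~\cite{HMVW11}'' --- and your proposal follows exactly that intended route: invoke the algorithmic structure theorem of Hermelin \etaln under the twin-free and proper-W-join-free normalisation, then use $\alpha(G)>4$ to prune the basic classes and the strip catalogue down to cases 1 and 2(a)--(c). Your sketch is consistent with, and considerably more detailed than, what the paper itself records, so it matches the paper's approach.
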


\medskip
\noindent
{\bf Parameterized Complexity Theory.} In this theory, we consider the problem input as a pair $(I,k)$, where $I$ is the main part and $k$ the parameter. 
A problem $\Pi$ is \emph{fixed-parameter tractable} if an instance $(I,k)$ can be solved
in time $O(f(k)|I|^c)$, where $f$ denotes a computable function and $c$ a constant independent of $k$. 
A reduced instance $(I', k')$ of a problem $\Pi$ for an instance $(I, k)$ of $\Pi$ is called a
\emph{kernel} if the following three conditions hold:
\begin{itemize}
\item [(i)] $k'\leq k$ and $|I'|\leq g(k)$ for some computable function $g$;
\item [(ii)] the reduction from $(I, k)$ to $(I', k')$ is computable in polynomial time;
\item [(iii)] $(I,k)$ is a {\tt Yes}-instance of $\Pi$ if and only if $(I',k')$ is a {\tt Yes}-instance of $\Pi$.
\end{itemize}
The upper bound $g(k)$ of $|I'|$ is called the \emph{kernel size}.
 It is well known that a parameterized problem
is fixed-parameter tractable if and only if it is decidable and 
kernelizable (cf.~\cite{Niedermeier06}). In the latter case, the next step is to research whether the problem has a \emph{polynomial} kernel, that is, allows a kernel size which is polynomial in $k$.
We refer to the monographs of
Downey and Fellows~\cite{DF99} and Niedermeier~\cite{Niedermeier06} for more on the theory of parameterized complexity.

\section{Mutually Induced Disjoint Paths}\label{s-fpt}

In this section we present the following result.

 \begin{theorem}\label{t-main}
The {\sc Induced Disjoint Paths} problem is fixed-parameter tractable on claw-free graphs when parameterized by $k$.
\end{theorem}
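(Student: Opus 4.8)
The plan is to follow the structure theorem for claw-free graphs (Lemma~\ref{l-structure}) recursively, reducing the problem on an arbitrary claw-free $G$ to the same problem on a collection of ``simple'' building blocks—where it can be solved directly—together with a bounded number of calls to the ordinary {\sc $k$-Disjoint Paths} algorithm (Lemma~\ref{t-RS}) for routing solution paths between the blocks. First I would normalize the instance. We may assume $G$ is connected, solving each component separately after checking that partners lie in a common component. Using Lemma~\ref{l-twinsets} and Lemma~\ref{l-properwjoin} I repeatedly remove true twins and shrink proper W-joins; the point is to argue that each such operation yields an equivalent instance in the sense defined before Lemma~\ref{t-RS}, because an induced linkage can always be rerouted to avoid a suppressed twin or the redundant part of a W-join. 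Together with normalizing the placement of terminals (bounding $|T_u|$ and recording terminal adjacencies), this puts us in the situation required by Lemma~\ref{l-structure}: $G$ is connected and admits neither twins nor proper W-joins.

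Next I perform a dichotomy on the independence number. The key structural observation is that an interior vertex $u$ of $P_i$ and an interior vertex $w$ of $P_j$ with $i\neq j$ are non-adjacent by condition (iii) (the exception does not apply, since neither is an end of its path). Hence, choosing one interior vertex per path that has one produces an independent set, so at most $\alpha(G)$ of the solution paths can be nontrivial, every other solution path being a single vertex ($s_i=t_i$) or a single edge. In the case $\alpha(G)\le 4$ this bounds the number of nontrivial paths by four, and I would solve the instance directly: after guessing which (at most four) paths are long and the bounded adjacency pattern imposed by the trivial paths, what remains is to realize a bounded number of mutually induced paths, which the restricted bounded-independence structure of a reduced claw-free graph lets us search over directly. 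The same method disposes of the base strips whose graph $J$ has bounded independence number.

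If $\alpha(G)>4$, I apply Lemma~\ref{l-structure}. In its first case $G$ is a proper circular-arc graph; using the representation from Lemma~\ref{l-deng} I would give a direct routing algorithm, exploiting that induced paths can essentially only move monotonically along the arc order, so that a bounded guess of entry/exit arcs combined with a disjoint-paths subroutine suffices. In its second case $G$ carries a strip-structure in which every strip is a spot, or a stripe that is proper interval, proper circular-arc, or of bounded independence number. I treat the connecting cliques $C_v$ as the interface of the decomposition: since every solution path is induced it meets each clique $C_v$ in at most two vertices, and since interior vertices of distinct paths are non-adjacent, each $C_v$ carries the interior of at most one path. Thus each strip behaves like a super-edge with at most two attachment points, the trace of the solution on the hypergraph $R$ is a family of internally vertex-disjoint walks, and I would (i) solve the problem locally inside each strip, which is one of the tractable base classes above, computing for every bounded boundary-connection pattern whether the strip realizes it by mutually induced segments, and (ii) combine these local solutions by routing through $R$ via a bounded number of applications of Lemma~\ref{t-RS}.

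The step I expect to be the main obstacle is this final combination. The difficulty is twofold. First, one must keep the number of interface patterns—which boundary vertices each path uses, in which strip each terminal sits, and how segments are linked across the cliques $C_v$—bounded by a function of $k$ rather than of $|V(G)|$, which is what makes the algorithm fixed-parameter tractable rather than merely $n^{O(k)}$. Second, independently computed strip solutions must, once glued, still satisfy the global non-adjacency condition (iii) of a mutually induced linkage, in particular at the shared cliques $C_v$ and for stripes with $|Z|=1$ that may absorb a terminal as a path endpoint. Cleanly handling the proper circular-arc base case, which is \emph{not} broken up by the strip decomposition, is the other delicate point.
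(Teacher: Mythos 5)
Your overall plan coincides with the paper's: normalize the instance, strip out twins and proper W-joins, invoke Lemma~\ref{l-structure}, dispose of the bounded-$\alpha$ and proper circular-arc base cases, and glue strip-local solutions together with one call to Robertson--Seymour. But the proposal stops exactly where the proof has to start: you explicitly defer the ``final combination'' step, and that step is the entire content of the paper's Lemma~\ref{l-strip}, by far the longest argument in the paper. Two concrete things are missing. First, the reason the number of interface patterns is $f(k)$ and not $n^{O(k)}$ is that strips containing \emph{no} terminal vertex never require any guessing: a stripe with $|Z|=1$ and no terminals can simply be deleted (a solution path cannot enter and leave through the single boundary clique), and a stripe with $|Z|=2$ and no terminals can be replaced by a shortest path between its two boundary cliques. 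Only the at most $O(k)$ strips that actually contain terminal vertices are branched on, each into at most six cases governed by where the partners of the terminals in the boundary cliques live (inside the strip, outside, or mixed). Without isolating this fact, ``bounded boundary-connection pattern per strip'' degenerates into a guess per strip, i.e.\ an $n^{O(k)}$ algorithm, which is what Fiala et al.\ already had. Second, the gluing is made correct not by ``routing through $R$'' in the abstract but by observing that after all strips are processed the residual graph is a \emph{line graph} (every neighborhood is a union of at most two cliques, Lemma~\ref{l-linegraph}), and that mutually induced paths in a line graph correspond exactly to vertex-disjoint paths in its preimage; that correspondence is what lets Lemma~\ref{t-RS} certify the global non-adjacency condition (iii). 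The pendant gadget vertices attached to the interface cliques $Y_1,Y_2$ are what transfer the local solutions' boundary behavior into this final Disjoint Paths instance; you would need to design and verify these.

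A secondary discrepancy: your $\alpha\le 4$ case bounds only the number of solution paths with an interior vertex, whereas the paper first makes the terminal vertices an independent set (Lemma~\ref{l-pre}, which also has to be checked to preserve claw-freeness, a nontrivial point since the generic subdivision tricks do not), so that $k\le\alpha\le 4$ outright and the known $n^{O(k)}$ algorithm of Fiala et al.\ finishes in polynomial time. Your version leaves up to $k$ forced one- or two-vertex paths whose adjacency constraints against the long paths still have to be enforced, and ``search over directly'' is not an algorithm: even $4$-{\sc Induced Disjoint Paths} on claw-free graphs is exactly the prior result you would be re-deriving. The independence normalization is also what guarantees, later, that each boundary clique of a strip and each side of a W-join contains at most one terminal vertex, which your case analysis would silently rely on.
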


Below, we outline the general approach of our algorithm, and then give the details of the subroutines that we use. First, we need some definitions. Let $G$ be a graph with terminal pairs $(s_1,t_1),\ldots (s_k,t_k)$ that forms 
an instance of {\sc Induced Disjoint Paths}. 
We call this instance {\it claw-free}, {\it twin-free}, or {\it proper W-join-free} if $G$ is 
claw-free, twin-free, or proper W-join-free, respectively. In addition, we call the instance {\it independent}  
if the terminal vertices form an independent set, and no terminal vertex represents two terminals of the same terminal pair. Note that in this definition it is still possible for a terminal vertex to represent more than one terminal.
However, no terminal vertex in an independent instance can represent more than two terminals if the instance has a solution and $G$ is claw-free. Otherwise, any solution would induce a claw in the neighborhood of this terminal vertex.

In our algorithm, it may happen that the graph under consideration gets disconnected. In that case we make the following implicit check. We stop considering this graph if there is a terminal pair of which the terminals are in two different connected components. Otherwise we consider each connected component separately. Hence, we may assume that the graph is connected.

\medskip
\noindent \rule{\textwidth}{0.1mm}\\
{\sc The Algorithm and Proof of Theorem~\ref{t-main}}\label{a-algorithm} 

\medskip
\noindent
Let a claw-free graph $G$ on $n$ vertices with terminal pairs $(s_1,t_1),\ldots, (s_k,t_k)$ for some $k\geq 1$ form an instance. 

\medskip
\noindent
{\bf Step 1. Reduce to an independent instance.}\\
We apply Lemma~\ref{l-pre} 
(stated later)
and obtain in $O(k^{2}n)$ time an independent and  equivalent instance that consists of a claw-free graph with at most $k$ terminal pairs.
For simplicity, we denote this graph and these terminals pairs by $G$ and $(s_1,t_1)$,$\ldots$, $(s_k,t_k)$ as well.

\medskip
\noindent
{\bf Step 2. Solve the problem if $\alpha$ is small.}\\
Because all terminal vertices are independent, we find that 
$k\leq \alpha(G)$ 
holds. Hence, if $\alpha(G)\leq 4$, we can solve the problem
by applying the aforementioned $n^{O(k)}$ time algorithm of Fiala~\etal\cite{FKLP12}. From now on we assume that $\alpha(G)>4$. 

\medskip
\noindent
{\bf Step 3. Remove twins.}\\
We apply Lemma~\ref{l-twin} 
and obtain in linear time an independent and  equivalent instance that consists of a claw-free,  twin-free graph with the same $k$ terminal pairs as before.
For simplicity, we denote the new graph by $G$ as well.

\medskip
\noindent
{\bf Step 4. Remove proper W-joins.}\\
We apply Lemma~\ref{l-properwjoin2} and obtain in $O(n^5)$ time an 
independent and  equivalent instance that consists of a claw-free, twin-free, proper W-join-free graph with the same $k$ terminal pairs as before.
For simplicity, we denote the new graph by $G$ as well.

\medskip
\noindent
{\bf Step 5. Solve the problem for a proper circular-arc graph.}\\
By Lemma~\ref{l-deng} we can check in linear time if $G$ is a proper circular-arc graph. If so, then
we apply Lemma~\ref{l-circ} to solve the problem in linear time.
From now on we assume that $G$ is not a circular-arc graph, and hence not a proper circular-arc graph.

\medskip
\noindent
{\bf Step 6.  Reduce to a collection of line graphs.}\\
By Lemma~\ref{l-structure} we find in polynomial time a strip-structure of $G$, in which each strip $(J,Z)$ is either
a spot, or a stripe with $|Z|=1$, and $J$ is proper circular-arc  or has $\alpha(J)\leq 3$, or
a stripe with $|Z|=2$, and $J$ is proper interval  or has $\alpha(J)\leq 4$. 
We apply Lemma~\ref{l-strip}, and in $6^kn^{O(1)}$ time either find that the instance has no solution, or obtain at most  $6^k$ line graphs on at most $n$ vertices and with at most $k$ terminals each, such that $G$ has a solution if and only if at least one of these
line graphs has a solution. 

\medskip
\noindent
{\bf Step 7. Solve the problem for each line graph.}\\
For each of the $6^k$ line graphs $G'$ created we can do this in $O(g(k)|{V_G'}|^6)$ time due to Lemma~\ref{l-line}.
Here, $g(k)$ is a function that only depends on $k$.
We conclude that our algorithm runs in 
$6^kg(k)n^{O(1)}$ time, as desired.

\smallskip
\noindent \rule{\textwidth}{0.1mm}

\medskip
\noindent
To finish the correctness proof and running time analysis of our algorithm, 
it remains to state and prove the missing lemmas, namely Lemmas~\ref{l-pre}--\ref{l-line}.
We do this as follows.
In Section~\ref{s-preprocessing} we show Lemmas~\ref{l-pre}--\ref{l-properwjoin2}, which are related to the preprocessing
of the input graph, that is, which are used in Steps 1--4. 
In Section~\ref{s-strips} we show Lemma~\ref{l-strip}, which we used in Step~6 where we obtain a set of line graphs. 
Finally, in Section~\ref{s-linegraphs} we show Lemma~\ref{l-line}, which we used in Step~7 where we solve the problem for line graphs.

\subsection{Independent, Twin-Free, and Proper W-Join-Free Instances}\label{s-preprocessing}

In this section we state and prove Lemmas~\ref{l-pre}--\ref{l-properwjoin2} used in Steps~1,3,4, respectively.

\begin{lemma}\label{l-pre}
There is a $O(k^{2}n)$-time  algorithm that transforms an instance consisting of an $n$-vertex
claw-free graph $G$ with $k$ terminal pairs
into an equivalent instance that is independent and claw-free.
\end{lemma}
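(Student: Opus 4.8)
The plan is to reach an independent instance by \emph{deletions} only: deleting vertices can never create a claw, and the edge deletions I use will be ordered so that they cannot create one either. Since deletions never increase $|V(G)|$ and I only ever drop terminal pairs, the equivalence conditions $k'\le k$ and $|V(G')|\le|V(G)|$ hold automatically, so the whole content is to preserve \texttt{Yes}/\texttt{No} answers. Everything rests on one observation: a terminal vertex is used only as an \emph{end} of a solution path. Indeed, if $u$ carries a terminal of pair $i$ and some path $P_\ell$ runs through $u$, then since two paths may share only vertices that are ends of both, $u$ must be an end of $P_\ell$ as well; hence $u$ is never an interior vertex of any solution.

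From this I extract two deletion rules. Call two distinct terminal vertices $u,v$ \emph{split partners} if some terminal pair has one of its terminals on $u$ and the other on $v$. First, if $u,v$ are adjacent but not split partners, then no solution path traverses the edge $uv$ (traversal would force $u$ or $v$ to be an interior vertex) and no solution path contains both $u$ and $v$; consequently deleting $uv$ keeps every solution path induced and only relaxes the mutual-induced constraints, so it preserves the answer in both directions. Second, if a non-terminal vertex $w$ is adjacent to two terminal vertices $u,v$ that are not split partners, then $w$ can be neither an end (it is not a terminal) nor an interior vertex: as an interior vertex adjacent to the ends $u$ and $v$ of foreign paths, the interior-adjacency rule would force both $u$ and $v$ to be ends of $w$'s own path, making them the two ends of a single pair --- excluded. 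Thus such a $w$ is unused and may be deleted.

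I then remove the two obstructions to independence. If $s_i=t_i=u$, the only $s_it_i$-path is the single vertex $u$; I delete pair $i$, and if $u$ carries no other terminal I also delete $u$ and its non-terminal neighbours, which are now unused. If $u=s_i$ and $v=t_i$ are adjacent partners, I first note that $P_i$ may be taken to be the single edge $uv$: any foreign interior vertex adjacent to $u$ or $v$ already violated the interior-adjacency rule in the original solution, since $u,v$ are ends of $P_i$ whatever its length, so shortening $P_i$ to $uv$ creates no new conflict. I then delete pair $i$ and, for whichever of $u,v$ carries no further terminal, delete that vertex together with its non-terminal neighbours. In both cases, an endpoint that still carries another terminal needs no deletion, because its role as an end already forbids foreign interior vertices in its neighbourhood. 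After this step no vertex carries both partners of a pair and no terminal is adjacent to its partner, so all remaining split partners are non-adjacent.

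The only remaining task --- and the main obstacle --- is to make the terminal vertices pairwise non-adjacent, because naively deleting an edge between two terminals can create a claw at a common neighbour $w$ (a pair $u,v$ adjacent inside $N(w)$ becomes independent and may complete a claw with a third, private neighbour of $w$). I handle this in the right order: using the second rule I first delete every non-terminal vertex adjacent to two non-split-partner terminals, and then, using the first rule, I delete \emph{all} remaining terminal--terminal edges (these join non-split-partner terminals and so are unused). The resulting graph is claw-free: any claw would require an independent triple in the neighbourhood of some vertex $p$ containing a pair $a,b$ joined by a deleted terminal--terminal edge; then $p$ is adjacent to the two terminals $a,b$, and since $a,b$ are not split partners, either $p$ is non-terminal and was already removed by the second rule, or $p$ is a terminal, in which case the edges $pa,pb$ were themselves deleted --- either way $a,b$ are not both neighbours of $p$, a contradiction. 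Hence the output is claw-free and independent, has at most $k$ pairs and at most $n$ vertices, and is equivalent to the input. For the running time, there are $O(k^2)$ ordered pairs of terminal vertices and scanning a neighbourhood for each costs $O(n)$, giving the $O(k^2n)$ term, while maintaining the graph under the at most $n$ vertex deletions accounts for the $O(n^2)$ term, for the claimed $O(k^2n+n^2)$ bound.
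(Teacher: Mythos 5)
Your proposal is correct and follows essentially the same approach as the paper: the same four reduction operations (deleting non-terminal common neighbours of terminals from different pairs, resolving pairs whose terminals coincide or are adjacent by deleting the pair and the now-useless neighbourhood, and finally deleting terminal--terminal edges), justified by the same key observation that terminal vertices are never interior to any solution path, and with the same ordering trick --- common-neighbour deletion before edge deletion --- to keep the graph claw-free. The only differences are presentational (your common-neighbour rule is keyed on ``non-split-partner'' terminals rather than on adjacency, and the pair-resolution step is moved earlier), not a different argument.
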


\begin{proof}
Let $G=(V,E)$ be a claw-free graph that has terminal pairs $(s_1,t_1)$,$\ldots$, $(s_k,t_k)$ for some $k\geq 1$. 
Let $|V|=n$, $|E|=m$, and let $T$ denote the set of all terminals, that is, $T=\bigcup_{u\in V}T_u$. 

We apply a number of operations on $G$ in order to obtain a new instance that consists of a graph $G'$ with terminal pairs $(s_1',t_1'),\ldots, (s_{k'}',t_{k'}')$. 
These operations are of three different types: the first type of operation removes one or more non-terminal vertices from $G$, the second type of operation only removes one or more terminals with their partners from $T$ without modifying $G$, and the third type of operation removes edges from $G$.
Hence, we will have $k'\leq k$ and $|V(G')|\leq |V(G)|$. We will prove that the new instance is a {\tt Yes}-instance if and only if the original instance is by showing that our operations preserve solutions and claw-freeness. Afterwards, we will see that the new instance is independent, and we will analyze the overall running time.

The following four rules describe the operations in detail. They must be applied consecutively, starting with Rule 1 and ending with Rule 4.

\medskip 
\noindent
{\bf Rule 1.}
Remove every non-terminal vertex $u$ from $G$ that is adjacent to two adjacent terminal vertices $v$ and $w$.

\medskip
\noindent
Such an operation preserves the solution as the removed vertices cannot be part of any solution path for $G$.
This can be seen as follows. 
If $v$ and $w$ represent terminals from different pairs, then $u$ cannot be used as (inner) vertex for any solution path.
Suppose that $v$ and $w$ each represent a terminal of the same pair, say $s_i\in T_v$ and $t_i\in T_w$. 
Then we may assume without loss of generality that $P_i=vw$.
Since the set of terminal pairs is not a multiset, $u$ cannot be used as an (inner) vertex of some other solution
path $P_j$.
Because we only removed vertices from $G$ and claw-free graphs are closed under vertex deletion, the resulting graph remains claw-free. 

\medskip
\noindent
{\bf Rule 2.}
Find the set $U$ of all terminal vertices $u$ such that $u$ only represents terminals  whose partners are in $N_G[u]$.
Remove $U$ and all non-terminal vertices of $N_G(U)$ from $G$. Remove from $T$ the terminals of all terminal pairs $(s_i,t_i)$ such that $s_i\in T_u$ or $t_i\in T_u$
for $u\in U$.

\medskip
\noindent
By the same argument that we used for Rule 1, the resulting graph is claw-free.

To show that this rule preserves the solution, we prove first that if the original instance has a solution, then the new instance has a solution. Let $P_i$ be an $s_it_i$-path in a solution for the original instance with $\{s_i,t_i\}\cap T_u=\emptyset$ for all $u\in U$. Then $P_i$ contains no internal vertices in $N_G[U]$. Observe also that the terminal vertices representing $s_i$ and $t_i$ are not deleted by Rule~2. This means that $P_i$ is a path in the graph for the new instance. Consequently, the paths $P_i$ for those terminal pairs $(s_i,t_i)$ with
$\{s_i,t_i\}\cap T_u=\emptyset$ for all $u\in U$ compose a solution for the new instance. 

Assume now that the new instance has a solution. Let $P_i$ be an $s_it_i$-path in the solution, and let $x,y$ be the vertices that represent $s_i,t_i$ respectively. Then $\{s_i,t_i\}\cap T_u=\emptyset$ for all $u\in U$. 
Let $v$ be an internal vertex of $P_i$. 
We show that if $v$ is adjacent to a vertex $z$ that represents a terminal in the original instance, then $z=x$ or $z=y$.  
Suppose that $z\notin \{x,y\}$.
Since $P_i$ is a path in a solution for the new instance, $z$ does not represent a terminal in the new instance. Therefore, $z$ may only represent terminals that were deleted from $T$ using Rule~2. Then $z$ is in $U$, because if it would only represent the partner of terminals that are in $U$, then by the definition of $U$ these partners are all in $N_{G}(z)$ and thus $z$ must belong to $U$ as well. However, as $P_{i}$ cannot have any internal vertices in $N_{G}[U]$ by Rule~2, $z$ cannot belong to $U$, a contradiction.

Now we can construct the solution for the original instance from the solution for the new instance by adding new paths as follows. 
For any terminal pair $(s_j,t_j)$ such that $s_j,t_j$ are represented by $u,v$ respectively and $u\in U$ or $v\in U$, let $P_j=u$ if $u=v$ and let $P_j=uv$ if $u\neq v$, and add $P_j$ to the solution. 
Then, because all paths from the solution to the new instance avoid the neighborhood of terminals that were deleted by Rule~2 (as argued above), the constructed set of paths is indeed a solution to the original instance.

\medskip
\noindent
{\bf Rule 3.}
Remove the terminals of every terminal pair $(s_i,t_i)$ with $s_i\in T_u$ for some $u\in V(G)$ and $t_i\in T_v$ for some $v\in N_G[u]$ from $T$.

\medskip
\noindent
This preserves the solution because of the following reasons. If the original instance has a solution, then the new instance has a solution. 
Now suppose that the new instance has a solution. We extend this solution by adding the path $P_i=u$ if $v=u$ or $P_i=uv$ otherwise.
Because we already applied Rule~2, we find that 
$T_u$ must contain at least one terminal whose partner is not represented by a vertex in $N_G[u]$.
In that case, $u$ is only adjacent to inner vertices of other solution paths that start in $u$.
Similarly, if $v\neq u$, then $v$ is only adjacent to inner vertices of other solution paths that start in $v$.
Hence, we may extend the solution of the new instance by the path $P_i$ in order to obtain a solution of the original instance.
The resulting graph is claw-free, as we did not touch $G$ and only modified $T$.

\medskip
\noindent
{\bf Rule 4.}
For every pair of adjacent terminal vertices, remove the edge between them.

\medskip
\noindent
This preserves the solution, because in this stage two adjacent terminal vertices do not represent two terminals from the same pair; otherwise we would have applied Rule~3 already. Now suppose that the resulting graph contains a claw with center $u$ and leaves $v,w,x$. Because we preserved claw-freeness so far, there must have been an edge between two leaves, say between $v$ and $w$. This implies that $v$ and $w$ are terminal vertices. 
Then $u$ must be a non-terminal vertex, as otherwise we would have removed the edges $uv$ and $uw$ as well. However, this is not possible, because we would have removed $u$ when applying Rule 1. Hence, the resulting graph is claw-free.

\medskip
\noindent
Already after applying Rule 3, there is no terminal vertex that represents two terminals from the same terminal pair. After applying Rule 4, all terminal vertices are independent.

It remains to analyze the running time. 
We assume that the graph is given as an adjacency list and an adjacency matrix. 
Rule~1 can be implemented in $O(k^{2}n)$ time: for each pair of adjacent terminal vertices, we check all non-terminal vertices for being a common neighbor.  
Rule~2 can be implemented in $O(k^2+kn)$ time: we first find the set $U$ (which has size at most $k$) in time $k^2$ and then 
delete $U$ together with the non-terminal vertices of $N_G(U)$ in time $O(kn)$ and modify $T$ in time $O(k)$.
Rule~3 can be implemented in $O(k)$ time by testing whether the terminals in each terminal pair are represented by the same vertex or by two adjacent vertices.
Rule 4 can be implemented in $O(k^2)$ time 
by testing whether any two terminals from different terminal pairs are represented by
adjacent vertices. 
We conclude that the total running time is $O(k^{2}n)$.
This completes the proof of Lemma~\ref{l-pre}.\qed
\end{proof}

\begin{lemma}\label{l-twin}
There is a linear-time algorithm that transforms an independent instance consisting of an $n$-vertex, claw-free graph $G$ with $k$ terminal pairs
into an equivalent instance that is independent, claw-free, and twin-free.
\end{lemma}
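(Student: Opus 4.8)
The plan is to use Lemma~\ref{l-twinsets} to compute all twin sets of $G$ in linear time, and then to shrink each twin set to a single \emph{kept} vertex by deleting the others: if a twin set contains a terminal vertex we keep that terminal vertex, and otherwise we keep one arbitrary vertex. Note first that each twin set contains at most one terminal vertex, because terminal vertices are mutually non-adjacent in an independent instance while a twin set is a clique. Since only non-terminal vertices are ever deleted, the number of terminal pairs stays $k$ and $|V(G')|\le|V(G)|$; claw-freeness is preserved because claw-free graphs are closed under vertex deletion, and the instance stays independent because the terminals and their representation are untouched. As $G'$ is an induced subgraph of $G$ with the same terminals, any solution for $G'$ is immediately a solution for $G$, so the only real work is to show that a solution for $G$ can be transformed into a solution for $G'$, and that $G'$ is genuinely twin-free.

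For the first task I would bound how a solution can meet a twin set $S$. Combining conditions (ii) and (iii) of mutual inducedness with the fact that any two vertices of $S$ are adjacent twins, I expect to show that (a)~no non-terminal vertex of $S$ can be an inner vertex of two solution paths, and two distinct non-terminal twins cannot be used by two paths, so at most one non-terminal vertex of $S$ is used at all; (b)~an induced path contains two adjacent twins only if it is exactly the single edge joining them, which is impossible here as it would force two terminals into $S$; and (c)~a terminal vertex lying on a path is necessarily an endpoint of that path (otherwise it is shared as an inner vertex, violating~(ii)). Hence a solution uses at most the unique terminal vertex $w^{*}$ of $S$ (if present) together with at most one non-terminal vertex $u\in S$, and when $u$ is used it is the neighbour of $w^{*}$ on some path $P_i$ having $w^{*}$ as an endpoint. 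I would then reroute: since $N[u]=N[w^{*}]$, the successor of $u$ on $P_i$ is also adjacent to $w^{*}$ while $w^{*}$ inherits all non-adjacencies of $u$, so deleting $u$ and joining $w^{*}$ directly to that successor yields an induced path on a subset of $V(P_i)$, creating no new conflict with the remaining paths. For a non-terminal twin set the solution uses at most one vertex, which can be swapped for the kept representative by the same twin argument. Carrying out these local, pairwise disjoint modifications over all twin sets produces a solution that survives in $G'$.

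It remains to verify that a single pass makes $G'$ twin-free, and to bound the running time. Two kept vertices from the same twin set cannot both survive, so consider kept vertices $r_1,r_2$ from distinct twin sets of $G$, which are not twins in $G$. If they are non-adjacent they stay non-adjacent, hence non-twins. If they are adjacent, some vertex $z$ distinguishes their closed neighbourhoods in $G$; if $z$ survives it still does so in $G'$, and if $z$ was deleted, then the kept representative $z'$ of $z$'s twin set has $N[z']=N[z]$, and a short case check (using that $z$ lies in neither the twin set of $r_1$ nor that of $r_2$) shows that $z'$ survives and still distinguishes $r_1$ and $r_2$. Thus $G'$ is twin-free, and computing the twin sets, selecting the kept vertices, and performing the deletions all run in $O(n+m)$ time. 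I expect the main obstacle to be the structural analysis of how a solution can intersect a twin set together with the accompanying rerouting argument, since it must carefully combine all three mutual-inducedness conditions with the independence of the terminals; by comparison, the global twin-freeness check is a routine neighbourhood argument.
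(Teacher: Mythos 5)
Your proposal is correct and follows essentially the same approach as the paper: compute the twin sets via Lemma~\ref{l-twinsets}, keep the (unique) terminal vertex of each class or an arbitrary representative otherwise, and delete the rest. In fact your rerouting analysis is more detailed than the paper's, which simply asserts that the deleted non-terminal twins cannot occur on a solution path; your explicit shortcut/replacement argument (and the check that one pass already yields a twin-free graph) fills in exactly the details the paper leaves implicit.
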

\begin{proof}
Let $G=(V,E)$ be a claw-free graph on $n$ vertices with $k$ terminal pairs $(s_1,t_1),\ldots, (s_k,t_k)$ for some $k\geq 1$ that forms an independent instance. We first find all twin sets of $G$ in linear time using Lemma~\ref{l-twinsets}. Let $A$ be a twin set of size at least two. 
Because the terminal vertices form an independent set, at most one vertex of $A$ is a terminal vertex. If one vertex of $A$, say $u$, is a terminal vertex, then we remove 
$A\setminus\{u\}$ from $G$. In the other case, \ie if $A$ does not contain a terminal vertex, we arbitrarily choose a vertex $v$ from $A$ and remove $A \setminus\{v\}$ from $G$. 
In both cases we preserve the solution, because all removed vertices are non-terminal vertices that cannot occur as an (inner) vertex in a solution path of a solution for $G$. We let $G'$ denote the twin-free graph obtained after applying this operation as long as possible.

Because we only removed non-terminal vertices, we find that $G'$ has the same set of $k$ terminals, which still form an independent set. Moreover, $|V(G')|\leq |V(G)|$ holds. Because $G$ is claw-free and the class of claw-free graphs is closed under vertex deletion, we find that $G'$ is claw-free. 
Clearly, the above procedure runs in linear time.
This completes the proof of Lemma~\ref{l-twin}.\qed
\end{proof}

\begin{lemma}\label{l-properwjoin2}
There is an $O(n^5)$-time algorithm that transforms an independent instance consisting of an $n$-vertex, claw-free, twin-free graph $G$ with $k$ terminal pairs
into an equivalent instance that is independent, claw-free, twin-free, and proper W-join-free.
\end{lemma}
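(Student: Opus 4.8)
The plan is to eliminate proper W-joins by an iterative detect-and-shrink procedure, in the same spirit as the preceding reductions. In each round I would first re-establish twin-freeness by invoking Lemma~\ref{l-twin} in linear time, and then apply Lemma~\ref{l-properwjoin} to search for a proper W-join $(A,B)$ in $O(n^2m)=O(n^4)$ time. If none is found, the current instance is already claw-free, twin-free, proper W-join-free, and independent, and I stop. Otherwise I would delete a single, carefully chosen non-terminal vertex of $A\cup B$, which destroys or strictly shrinks this W-join and reduces $|V(G)|$ by one, and then repeat. Since every operation only deletes vertices, $|V(G)|$ is strictly decreasing, so the loop runs at most $n$ times; the dominating cost per round is the $O(n^4)$ detection, giving the claimed $O(n^5)$ bound. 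Claw-freeness holds throughout because it is preserved under vertex deletion, the terminal set is never changed so $k'=k$, and independence is maintained because I only ever delete non-terminal vertices: as $A$ and $B$ are cliques and the instance is independent, each of $A$ and $B$ contains at most one terminal vertex, so non-terminal vertices of $A\cup B$ always exist.

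The \emph{heart} of the argument is to show that $A\cup B$ always contains a non-terminal vertex whose deletion preserves the solution. Here I would first record how a mutually induced solution can meet a clique. Since $A$ is a clique, inducedness forbids three of its vertices on one solution path and condition~(iii) forbids inner vertices of two distinct paths from both lying in $A$; together with independence this bounds by a small constant the number of vertices of $A$ that any single solution can use, and symmetrically for $B$. I would then select a suitable non-terminal vertex $a\in A$ to delete and argue by an exchange/rerouting argument: given a solution using $a$, I replace $a$ by another vertex $a'\in A$. Because the external neighborhood is uniform over $A$ and $A$ is a clique, $a'$ agrees with $a$ on every neighbor outside $B$; hence the only obstruction to the swap comes from the differing adjacencies of $a$ and $a'$ to $B$, that is, from solution paths that cross the join from $A$ into $B$.

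Controlling exactly these crossing paths is the \emph{main obstacle}. I would use claw-freeness to constrain the bipartite adjacency between the cliques $A$ and $B$: the absence of a claw centered at a vertex of $A$ (whose neighborhood already contains the non-adjacent pair formed by an $A$-mate and a $B$-neighbor) restricts how the neighborhoods $\{N(v)\cap B : v\in A\}$ can interleave, and dually for $B$. Exploiting this together with the proper W-join conditions (every vertex of $A$ has both a neighbor and a non-neighbor in $B$, and vice versa) and with twin-freeness (so that all these neighborhoods are distinct), I would choose the vertex to delete as an extremal one in the resulting order, so that any path entering $B$ through $a$ can be rerouted through the chosen replacement $a'$ without creating chords or forbidden adjacencies with the other solution paths. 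Finally, since a deletion may create new twins, the twin-removal step at the start of the next round cleans these up; as all operations only remove vertices the process terminates, and the instance present when detection first reports no W-join is simultaneously claw-free, twin-free, proper W-join-free, and independent, and is equivalent to the input, as required.
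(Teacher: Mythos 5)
Your outer loop---detect a proper W-join in $O(n^4)$ time via Lemma~\ref{l-properwjoin}, delete at least one non-terminal vertex, restore twin-freeness with Lemma~\ref{l-twin}, and repeat at most $n$ times---matches the paper's complexity accounting, and your observations that vertex deletion preserves claw-freeness and independence are correct. The gap is in what you yourself call the heart of the argument: the existence of a non-terminal vertex of $A\cup B$ whose deletion preserves the solution is never actually established. The paper does not use any ``interleaving order'' on the neighborhoods $N(v)\cap B$ or an extremal-vertex choice. What does the work is a four-way case analysis on where the (at most one per clique) terminal vertices of $A\cup B$ sit and how many terminals they represent, combined with the observation that, since $A$ and $B$ are cliques with uniform external neighborhoods and the paths are mutually induced, a solution can use at most one vertex of $A$ and one of $B$ beyond terminal endpoints. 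This lets the paper delete all of $A\setminus\{u\}$ at once in three of the cases, and all of $(A\cup B)\setminus\{u,v,w\}$ for a chosen edge $uw$ and non-edge $vw$ in the terminal-free case.

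Your exchange argument ``replace $a$ by another $a'\in A$'' does not cover the configurations that force those cases. If $A=\{u,a\}$ with $u$ a terminal vertex, the only candidate replacement inside $A$ is $u$ itself, so safety of deleting $a$ requires rerouting the path to enter $B$ directly from $u$---an argument about $u$'s neighbors in $B$ guaranteed by the W-join definition, not about an order on $A$. If $u\in A$ and $v\in B$ represent partnered terminals $s_i$ and $t_i$, the $s_it_i$-path must cross the join, and the paper's safety argument rests on condition~(iii) showing that no \emph{other} solution path can enter $N[A\cup B]$ at all (indeed the paper deletes $N[A\cup B]$ together with the pair $(s_i,t_i)$ there, contradicting your claim that the terminal set is never touched). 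In the terminal-free case one must retain both an $A$--$B$ edge and an $A$--$B$ non-edge to reroute the two ways a solution can meet the join, which is why three vertices are kept. None of this is delivered by the claw-freeness/interleaving machinery you invoke, which is both undeveloped and, once the terminal case analysis is done, unnecessary.
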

\begin{proof}
Let $G=(V,E)$ be a claw-free, twin-free graph on $n$ vertices with $k$ terminal pairs $(s_1,t_1),\ldots, (s_k,t_k)$ for some $k\geq 1$ that forms an independent instance.
Because of the latter property, 
every clique in every proper W-join in $G$ contains at most one terminal vertex.
This means that we only have to distinguish between the following four cases for every proper W-join $(A,B)$.
In this case analysis, we also  assume that every terminal vertex represents at most two terminals, because our instance is independent and $G$ is claw-free.

\medskip
\noindent
{\bf Case 1.} At least one of $A,B$, say $A$, contains a terminal vertex $u$ that represents exactly two terminals.

\medskip
\noindent
We remove all vertices of $A\setminus \{u\}$ from $G$.
This preserves the solution, which can be seen as follows.
Let $(s_{i},t_{i})$ and $(s_{j},t_{j})$ be the two terminal pairs with a terminal represented by $u$. Without loss of generality, $u$ represents $s_{i}$ and $s_{j}$.
Let $v\neq u$ be a vertex of $A$.
Suppose that we have a solution for $G$ that has a solution path $P$ containing $v$.  By the definition of a solution, $P$ must have $u$ as one of its two end-vertices. Then we may assume without loss of generality that $P$ is the 
$s_it_i$-path. Let $w\neq u$ be the other neighbor of $v$ on $P$. This neighbor exists,
because $v$ is not a terminal vertex. Since every vertex not in $A\cup B$ is either adjacent to all vertices of $A$ or to none of them, $w$ must be in $B$; otherwise $P$ is not induced.

Now consider the solution path $P'$ of this solution that connects terminals $s_j$ and $t_j$. 
Because $u$ represents $s_j$, we find that $P'$ also starts in $u$. 
Since $s_j$ and $t_j$ are represented by two different non-adjacent vertices, we find that the neighbor of $u$ on $P'$ is an inner vertex of $P'$. Let $v'$ be this neighbor.  If $v'\in V\setminus (A\cup B)$, then $v'$ is adjacent to $v$ by the definition of a proper $W$-join. This is not possible, because inner vertices of two different solution paths are not adjacent. For exactly the same reason we find that $v'\notin A$. Hence, $v'\in B$.  But then $v'$ and $w$ are adjacent. This is not possible either by the definition of a solution. We conclude that removing all vertices in $A\setminus \{u\}$ preserves the solution.
As $A$ now has size 1, $(A,B)$ is no longer a proper W-join.

\medskip
\noindent
{\bf Case 2.}  The cliques $A$ and $B$ each have exactly one terminal vertex $u$ and $v$,  respectively, that each represent exactly one terminal.

\medskip
\noindent
First suppose that the two terminals represented by $u$ and $v$ are from different terminal pairs. We assume without loss of generality that $u$ represents terminal $s_i$ and that $v$ represents terminal $s_j$, where $i\neq j$. 
We remove all vertices of $A\setminus \{u\}$ from $G$.
We claim that this preserves the solution.
In order to obtain a contradiction, assume that we have a solution for $G$ that has a solution path $P$ containing a vertex $w\in A\setminus \{u\}$.
Because $u$ and $w$ both belong to $A$, they are adjacent. Hence, $P$ must be the $s_it_i$-path. 
Because terminal vertices are non-adjacent and $u$ is a terminal vertex, $w$ is not a terminal vertex. This means that $w$ has a neighbor $w'\neq u$ on $P$. 
If $w'\in A$, then $P$ is not induced. Also if $w'\in V\setminus (A\cup B)$, then $P$ is not induced; this follows from the definition of a proper W-join. Hence, we find that
$w'\in B$. Because $v$ is the only terminal vertex in $B$, this means that $w'$ is not a terminal vertex. Because $v$ and $w$ both belong to $B$, they are adjacent.
However, $v$ only represents $s_j$ and $i\neq j$. Hence, we obtain a contradiction. 

Now suppose that the two terminals represented by $u$ and $v$ are from the same terminal pair, and say $u$ represents terminal $s_i$ and $v$ represents terminal $t_i$. Let $w$ be a neighbor of $u$ in $B$; 
note that $w\neq v$ as our instance is independent. 
We remove $N[A \cup B]$ and the terminal pair $(s_{i},t_{i})$. We claim that this preserves the solution. This can be seen as follows. Suppose that we have a solution for $G$. 
Let $P$ be the $s_it_i$-path. 
Because $(A,B)$ is a proper W-join, $N[u]\cup N[v]=N[A\cup B]$.
Since $u$ and $v$ only represent $s_i$ and $t_i$, respectively, and are the only terminal vertices in $A\cup B$, the only solution path that can use a vertex from 
$N[u]\cup N[v]=N[A\cup B]$ is $P$.
Consequently, removing $P$ results in a solution for the resulting instance.
Moreover, if we have a solution for the resulting instance, we extend
it to a solution for $G$ by adding the $s_it_i$-path $uwv$.

\medskip
\noindent
{\bf Case 3.} Exactly one of $A,B$, say $A$, contains a terminal vertex, and this terminal vertex represents exactly one terminal.

\medskip
\noindent
Let $u\in A$ be this terminal vertex. Let $s_i$ be the terminal represented by $u$. We remove all vertices of $A\setminus \{u\}$ from $G$. We claim that this preserves the solution. This can be seen as follows. Suppose that we have a solution for $G$. Let $P$ be the $s_it_i$-path. Because $u$ is the only terminal vertex and $u$ represents only one terminal, the only solution path that uses a vertex from $A\setminus \{u\}$ is $P$. Let $v$ be the neighbor of $u$ on $P$. If $v\notin A$, then we are done. Suppose that $v\in A$. Because $A$ only contains $u$ as a terminal vertex, $v$ is an inner vertex of $P$. Consequently, $v$ has another neighbor on $P$ besides $u$. Let $w$ be this neighbor. Because $u$ and $v$ are in $A$, we find that $w\notin N(A)\setminus B$. Hence $w\in B$. Then we reroute $P$ by replacing $v$ and $w$ by a neighbor of $u$ in $B$; such a neighbor exists by the definition of a proper W-join.

\medskip
\noindent
{\bf Case 4.} Neither $A$ nor $B$ contains a terminal vertex.

\medskip
\noindent
By definition, $A$ contains two vertices $u$ and $v$ such that $u$ has a neighbor $w\in B$ that is not adjacent to $v$.
We remove all vertices of $(A\cup B)\setminus \{u,v,w\}$. We claim that this preserves the solution. This can be seen as follows.
Suppose that we have a solution for $G$. 
Then at most one vertex of $A$ and at most one vertex of $B$ is used as an (inner) vertex of some solution path; otherwise we would have a solution path that is not induced, because $(A,B)$ is a proper W-join.
If no solution path uses an edge between a vertex from $A$ and a vertex from $B$,  then we can reroute solution paths by replacing a vertex in $A$ by $v$ and a solution 
vertex in $B$ by $w$ if necessary. In the other case, if there is a solution path that uses such an edge, then we can reroute this solution path by replacing the end-vertices of this edge by $u$ and $w$ if necessary.

\medskip
In each of the four cases, we destroy the proper W-join. Note that we do this by removing one or more non-terminal vertices from $G$, 
and in addition by removing two terminal vertices representing terminals of a terminal pair if the second subcase of Case 2 occurred.
As such, the resulting graph has fewer vertices than $G$ and together with the remaining terminal pairs $(s_1,t_1),\ldots,(s_k',t_k')$ forms an independent and claw-free instance.

We may have created new twins. However, Lemma~\ref{l-twin} tells us that we can make the resulting graph twin-free, while preserving all the other properties. Hence, applying the two rules ``destroy a proper W-join'' and ``make the graph twin-free" consecutively and as long as possible yields an equivalent instance that is independent, claw-free, twin-free, and proper W-join-free.

We are left to analyze the running time. 
We can find a proper W-join in $O(n^4)$ time by Lemma~\ref{l-properwjoin}. 
Distinguishing the right case and applying the corresponding rule takes $O(n)$ time. Afterwards, we have removed at least one vertex.
Every call to Lemma~\ref{l-twin} takes $O(n^2$) time.  
We conclude that the total running time is $O(n(n^4+n+n^2))=O(n^5)$. This completes the proof of Lemma~\ref{l-properwjoin2}. \qed
\end{proof}

\subsection{Strips, Spots, and Stripes}\label{s-strips}

In this section we state and prove Lemma~\ref{l-strip} used in Step 6.

\begin{lemma}\label{l-strip}
Let $G$ be a graph that together with a set $S$ of $k$ terminal pairs forms a claw-free, independent, and twin-free instance of 
{\sc Induced Disjoint Paths}.
Let $(R,\{(J_e,Z_e) \mid e\in E(R)\})$ be a strip-structure for $G$, in which each strip $(J,Z)$ either is
\begin{itemize}
\item[1.] a spot, or
\item [2.] a stripe with $|Z|=1$, and $J$ is proper circular-arc  or has $\alpha(J)\leq 3$, or
\item [3.] a stripe with $|Z|=2$, and $J$ is proper interval  or has $\alpha(J)\leq 4$. 
\end{itemize}
There is a 
$6^kn^{O(1)}$-time 
algorithm that either shows that $(G,S)$ has no solution, or 
produces a set ${\cal G}$ of at most $6^k$ graphs,
such that
each $G'\in {\cal G}$ is a line graph with at most $|V(G)|$ vertices and at most $k$ terminal pairs, and such that 
$G$ has a solution if only if at least one graph in ${\cal G}$ has a solution.
\end{lemma}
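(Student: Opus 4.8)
The plan is to turn the given strip-structure into one that consists of spots only, since such a structure is immediately recognisable as a line graph. Indeed, if every strip is a spot, then $G$ is precisely the line graph of the multigraph $R'$ obtained from the backbone $R$ by viewing each spot $(J,Z)$ with $e=\{v_1,v_2\}$ as an edge of $R'$ between $v_1$ and $v_2$: the unique interior vertex of the spot plays the role of that edge, two interior vertices are adjacent in $G$ exactly when their spots share a vertex of $R$, and each junction set $C_v$ is the clique formed by the edges of $R'$ at $v$. (Equivalently, one applies Lemma~\ref{l-linegraph}, as then the neighbourhood of every vertex is the union of its at most two junction cliques.) So the whole task reduces to rewriting every non-spot strip into spots, or into a pendant induced path, while preserving solutions, with $6^k$-fold guessing confined to the strips that contain terminals.

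The main tool is that each junction set $C_v$ induces a clique of $G$. I would first prove the key consequence that \emph{at most one} solution path can pass through a single strip as a ``through route''. If two solution paths both entered a strip $(J,Z)$ through the same endpoint $z$, they would each use a vertex of $N_J(z)\subseteq C_v$; since $C_v$ is a clique these two vertices are adjacent, and in a terminal-free strip neither is a terminal, so both are inner vertices of their paths, contradicting mutual-inducedness. This lets me handle every non-spot, terminal-free strip \emph{deterministically}: a stripe with $|Z|=1$ is a pendant that no solution path need enter and whose (non-terminal) interior may be deleted; a stripe with $|Z|=2$ is used at most once, as a single through route, so I replace it by one spot joining $v_1$ and $v_2$ exactly when an induced $N_J(z_1)$--$N_J(z_2)$ path exists inside $J$, and delete it otherwise. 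The existence test is a small {\sc Induced Disjoint Paths} instance inside $J$, solvable in polynomial time because $J$ is proper circular-arc (Corollary~\ref{c-circular}) or has $\alpha(J)\le 4$ (the $n^{O(\alpha)}$-time algorithm of Fiala \etal\cite{FKLP12}); the strip property $J[V(J)\setminus Z]=G[V(J)\setminus Z]$ guarantees that inducedness inside $J$ coincides with inducedness in $G$, so the replacement preserves solutions.

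It remains to deal with the strips whose interior contains a terminal; there are at most $2k$ of them, one per terminal vertex. For each terminal pair $(s_i,t_i)$ I would guess how its solution path $P_i$ leaves the interior strips of $s_i$ and of $t_i$: at each end the path either exits towards one of the at most two junctions of that strip or stays inside it, and a short case analysis (using $|Z|\le 2$ and the consistency between the two ends) shows that there are at most six admissible patterns per pair, hence at most $6^k$ global guesses. For a fixed guess I replace each terminal-containing strip by a line-graph gadget built from spots and a pendant induced path realising the guessed routing and carrying the terminals along; whether the routing is actually realisable inside the strip is again decided by solving the corresponding small {\sc Induced Disjoint Paths} subproblem within $J$ via Corollary~\ref{c-circular} or the bounded-$\alpha$ algorithm. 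After all replacements every remaining strip is a spot or a pendant path, so each of the $\le 6^k$ resulting instances is a line graph on at most $|V(G)|$ vertices with at most $k$ terminal pairs, and $G$ has a solution if and only if at least one of them does. The running time is $6^k$ times a polynomial, as each guess triggers only polynomially many polynomial-time strip computations.

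The hard part is this last step: proving that six routing patterns per terminal pair genuinely capture every solution, and that the gadget replacement preserves mutual-inducedness \emph{across} the junctions shared by neighbouring strips, so that a solution of a gadget instance expands back into a solution of $G$ and vice versa. The proper circular-arc stripes of type~2, whose cyclic interior offers more routing freedom than the interval case, are where this case analysis is most delicate; there the clique constraint at the single junction, together with Corollary~\ref{c-circular} applied inside the strip, is what keeps the number of essentially different behaviours bounded.
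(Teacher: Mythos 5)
Your overall strategy is the paper's own: process the strips of the decomposition, use the clique property of the junction sets $C_v$ to argue that a terminal-free strip carries at most one ``through'' path, solve the small {\sc Induced Disjoint Paths} subinstances inside each strip graph $J$ via the proper circular-arc algorithm or the $n^{O(\alpha)}$ algorithm of Fiala \etal, branch with a factor of at most $6$ only at strips whose interior contains terminals, and recognise the residual all-spot structure as a line graph via Lemma~\ref{l-linegraph}. The deterministic handling of terminal-free strips (delete a pendant $|Z|=1$ stripe, replace a traversable $|Z|=2$ stripe by a single through-route) is also sound and matches the paper, which keeps a shortest $X_1$--$X_2$ path instead of a spot; both are justified by the same at-most-one-through-path observation.

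However, there is a genuine gap, and you have located it yourself: the claim that ``a short case analysis \ldots shows that there are at most six admissible patterns per pair'' is precisely the content of the lemma, and you do not carry it out. The paper's proof is almost entirely this case analysis: it distinguishes whether $X_1$ and/or $X_2$ contain a terminal vertex, whether a terminal vertex in $X_h$ represents one or two terminals, and whether each represented terminal is $H$-homogeneous, $F$-homogeneous, or mixed (partner inside versus outside the strip), and for each combination it must verify both directions of the equivalence --- that a solution of $G$ restricts to a solution of the modified strip instance \emph{and} that a solution of the gadgeted instance lifts back to a mutually induced family in $G$, using the clique property of $X_h\cup Y_h$ to rule out adjacencies across the junction. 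Several of these subcases require arguments that are not routine (e.g.\ ruling out two mixed terminal pairs through one junction, the rerouting when a terminal vertex in $X_h$ represents two terminals, and the five- or six-way branch on which of $X_1,X_2,Y_1,Y_2$ a solution avoids). Without this analysis you have not established that six branches suffice, nor that the gadget replacements preserve solutions across shared junctions --- which, as you note, is the hard part. A second, smaller imprecision: your through-route argument as stated only covers two paths entering via the \emph{same} $z\in Z$; for a terminal-free $|Z|=2$ stripe you must also observe that any path entering must exit, so any two such paths collide in $N_J(z_1)\subseteq C_{v_1}$, and for terminal-containing strips the argument does not apply at all and must be replaced by the case analysis above.
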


\begin{proof}
Let $G$ be an $n$-vertex graph with a set of $k$ terminal pairs that has the properties as described in the statement of the lemma. 

Our algorithm is a branching algorithm 
that 
applies a sequence of graph modifications to $G$ until a line graph remains in each leaf of the branching tree. While branching, the algorithm keeps the terminal set and the strip structure up to date with the modifications being performed. This is possible dynamically, \ie without needing to recompute a strip structure from scratch, and no new strips are created in the algorithm. Moreover, the modifications ensure that all intermediate instances are claw-free and independent, \ie it is not necessary to reapply Lemma~\ref{l-pre}. Finally, we note that the modifications may remove some or all of the vertices of a strip. For example, for a strip $(J,Z)$, it may be that we remove $N[z]$ for some $z \in Z$, thus reducing the size of $Z$. Hence, at any time during the algorithm, a strip $(J,Z)$ is either
\begin{itemize}
\item[1.] a spot, or
\item [2.] a stripe with $|Z| = 1$, and $J$ is proper circular-arc  or has $\alpha(J)\leq 3$, or
\item [3.] a stripe with $|Z|=2$, and $J$ is proper interval  or has $\alpha(J)\leq 4$. 
\end{itemize}
Observe that, for example, the deletion of $N[z]$ for some $z \in Z$ preserves membership of one of these categories. It is also worth noting that such a deletion may create twins in an intermediate instance. However, the algorithm only relies on the original instance being twin-free, and hence this poses no problem.

The algorithm considers each strip at most once in any path of the branching tree. The branching strategy that the algorithm follows for a strip $(J,Z)$ depends on a complex case analysis. The main distinction is between the case $|Z|=1$ and the case $|Z|=2$. As we shall see, we do not have to branch in the first case. However, in the second case we may have to do so. After processing a strip and possibly branching, we obtain
for each branch
a new, intermediate instance of the problem that consists of the induced subgraph $G'$ of remaining vertices of $G$ together with those terminal pairs of $G$ that are represented by terminal vertices in $G'$. We call this \emph{reducing to $G'$}. Then the algorithm considers the next strip of the updated strip structure. This strip is arbitrarily chosen from the set of remaining unprocessed strips.

Before we begin, we first recall a number of properties that we will use throughout the case analysis and prove one additional claim. We recall that $T_u$ denotes the set of terminals represented by $u$ and that no two partners are represented by $u$, as $G$ and its set of terminal pairs form an independent instance. The definition of being an independent instance also means that the set of terminal vertices is independent. The latter property together with the claw-freeness of $G$ implies that every terminal vertex represents at most two different terminals.

In the claim below, $J'$ denotes a (not necessarily proper) induced subgraph of $J$ and $S_{J'}$ denotes a set of 
at most $k$
terminal pairs in $J'$, which is not necessarily a subset of $S$.

\medskip
\noindent
{\it Claim 1. We can decide in $n^{O(1)}$ time whether an instance $(J',S_{J'})$ is a {\tt Yes}-instance.}

\medskip
\noindent
We prove Claim 1 as follows. Either $J'$ is a proper circular-arc graph (or even a proper interval graph) or $\alpha(J')\leq 4$ (or even $\alpha(J')\leq 3$).
In the first case, we use  Lemma~\ref{l-circ}. In the second case, we deduce that 
$k\leq \alpha(J')\leq 4$, and we can use the
$n^{O(k)}$ time algorithm of Fiala et al.~\cite{FKLP12} for solving {\sc Induced Disjoint Paths}. This proves Claim~1.

\medskip
\noindent
We are now ready to start our case analysis.
In this analysis, we sometimes write that we {\it solve the problem on} an induced subgraph $G'$ of $G$.
Then we implicitly assume that we solve the {\sc Induced Disjoint Paths} problem on $G'$, where
$G'$ has inherited those terminal pairs of $G$ that are represented by terminal vertices in $G'$. 

\medskip
\noindent
{\bf Case 1.} $|Z|=1$.\\
We write $H=G[J\setminus Z]$ and $F=G-H$.
Assume that $Z=Z_{e_1}$ with $e_1=\{v\}$. Let $e_2,\ldots,e_p$ be the other hyperedges of $R$ that contain $v$.
For $i=1,\ldots,p$, we let $z_v(e_i)$ denote the vertex in $Z_{e_i}$ corresponding to $v$. Let $X=N_{J_{e_1}}(z_v(e_1))$ and
$Y=N_{J_{e_2}}(z_v(e_2))\cup \cdots \cup N_{J_{e_p}}(z_v(e_p))$. By definition, $X$ and $Y$ are both nonempty, $X\cap Y=\emptyset$ and $X\cup Y$ is a clique in $G$. Moreover,
$Y$ separates $V(H)$ from $V(F)\setminus Y$ if $V(F)\setminus Y$ is non-empty. 

If $H$ contains no terminal vertices, then we remove all vertices of $H$ from $G$. We may do this, because no path in a solution for $G$ will use a
vertex from $H$. The reason is that such a path will need to pass through $Y$ at least twice. This is not possible, because $Y$ is a clique.
From now on we assume that $H$ contains at least one terminal vertex.

Below we split Case 1 in a number of subcases. In these subcases we solve the problem for $H$ or the graph obtained from $H$ by adding a new vertex
adjacent to every vertex in $X$. The latter graph is isomorphic to $J$, whereas $H$ is an induced subgraph of $J$. Hence, this subroutine takes $n^{O(1)}$ time due to Claim 1.

\medskip
\noindent
{\bf Case 1a.} $X$ contains at least one terminal vertex.\\
Because $X$ is a clique, $X$ contains exactly one terminal vertex. Let $u$ be this terminal vertex.

Suppose that there is a pair $(s_i,t_i)$ with $s_i\in V(H)\setminus X$ and $t_i\in F\setminus Y$. Then $\{s_i,t_i\}\cap T_u=\emptyset$.
We conclude that $G$ has no solution, because the $s_it_i$-path of 
any solution for $G$ must pass $X$ and as such contain a neighbor of the terminal vertex $u$ as one of its inner vertices.
This is not allowed as $u$ is an end-vertex of at least one other solution path. 
From now on, suppose that no such pair exists.

First suppose that all partners of the terminals in $T_u$ belong to $H-N_H[u]$. Then no path in any solution for $G$ will use a vertex from $Y$. Hence, we first solve
the problem for $H$. 
If the answer is {\tt No}, then $G$ has no solution. Otherwise, we reduce to $F-Y$.
 
Now suppose that all partners of the terminals in $T_u$ belong to $F-Y$.  If $u$ represents more than one terminal, then we return {\tt No}. The reason is that $u$ then represents two terminals from different terminal pairs. The corresponding paths in any solution for $G$ must both contain a vertex from $Y$. This is not possible, because $Y$ is a clique. Hence $u$ represents exactly one terminal. Then no path in any solution for $G$ will use a vertex from $N_H[u]$. Hence, we can first solve 
the problem for $H-N_H[u]$. If $H$ has no solution, then we return {\tt No}. 
Otherwise, we reduce to $F+u$.

Finally, in the remaining case, we may assume without loss of generality that $u$ represents two terminals $s_i$ and $s_j$, such that $t_i\in V(H)\setminus N_H[u]$ and
$t_j\in V(F)\setminus Y$. This means that we can first solve the problem for $H-(X\setminus \{u\})$.  
If $H-(X\setminus \{u\})$ has no solution, then we return {\tt No}. Otherwise, we reduce to $F+u$.

\medskip
\noindent
{\bf Case 1b.} $X$ contains no terminal vertices.\\
First suppose that there is no terminal pair that is {\it mixed}, i.e., has one of its terminals in $H-X$ and the other one in $F$. Then
we first solve the problem for $H$ and $H-X$. If neither $H$ nor $H-X$ has a solution, then we return {\tt No}.
If $H-X$ has a solution, then we reduce to $F$. If $H-X$ has no solution but $H$ has, then there is a solution path in every solution for $G$ that
uses a vertex from $X$. Hence, in that case, we reduce to $F-Y$. 

Now suppose that there is exactly one mixed terminal pair. Let $(s_i,t_i)$ be this pair, where we assume that $s_i\in H-X$ and $t_i\in F$. 
Let $v'$ denote a new vertex added to $H$ by making it adjacent to every vertex in $X$. Let $H^*$ denote the resulting graph. Assume that $v'$ represents exactly one terminal, which is a new terminal $t_i'$ that replaces the partner $t_i$ of $s_i$.
We first solve the problem for $H^*$; note that $H^*$ is isomorphic to $J$. 
If $H^*$ has no solution, then we return {\tt No}.

Otherwise we reduce to the graph $F^*$ that is obtained from $F$ by adding a new vertex $u'$ and a new vertex adjacent to all vertices of $Y$ and to $u'$, and letting $u'$ represent a new terminal $s_i'$ that is the new partner of $t_i$. 
Note that the above modification of $F$ into $F^*$ ensures that the resulting instance is independent.

Finally, suppose that there are two or more mixed terminal pairs. Then we return {\tt No}. The reason is that in that case every solution must contain at least two different paths that use a vertex from $X$. This is not possible, because $X$ is a clique.

\medskip
\noindent
{\bf Case 2.} $|Z|=2$.\\
If $(J,Z)$ is a spot, we do nothing. Hence we assume that $(J,Z)$ is a stripe. 
We write $H=G[J\setminus Z]$ and $F=G-H$.
Assume that $Z=Z_{e_1}$ with $e_1=\{v_1,v_2\}$. Let $e^h_2,\ldots,e^h_{p_h}$ be the other hyperedges of $R$ that contain $v_h$ for $h=1,2$.
For $h=1,2$ and $i=1,\ldots,p_h$, we let $z_v(e^h_i)$ denote the vertex in $Z_{e^h_i}$ corresponding to $v_h$. For $h=1,2$, let $X_h=N_{J_{e^h_1}}(z_v(e^h_1))$ and
$Y_h=N_{J_{e^h_2}}(z_v(e^h_2))\cup \cdots \cup N_{J_{e^h_p}}(z_v(e^h_{p_h}))$. 
Because $(J,Z)$ is a stripe, $X_1\cap X_2=\emptyset$.
Also by definition, we have that for $h=1,2$, the sets $X_h$ and $Y_h$ are both nonempty, $(X_1\cup X_2)\cap (Y_1\cup Y_2)=\emptyset$, and $X_h\cup Y_h$ is a clique in $G$. Moreover,
$Y_1\cup Y_2$ separates $V(H)$ from $V(F)\setminus (Y_1\cup Y_2)$, should $V(F)\setminus (Y_1\cup Y_2)$ be nonempty. 
As an aside, we note that $Y_1$ and $Y_2$ may share some vertex. In that case, such a vertex corresponds to a spot. Because $G$ is twin-free, there can be at most one such vertex.

If $H$ contains no terminal vertices, then we remove all vertices of $H$ from $G$ 
except the vertices on a shortest path from a vertex $u_1\in X_1$ 
to a vertex $u_2\in X_2$.
We may do this, because every path $P$ in any solution for $G$ cannot use just one vertex from $H$; in that case such a vertex will be in $X_1$ or $X_2$ and then two vertices of $Y_1$ or of $Y_2$ are on $P$, which is not possible because $Y_1$ and $Y_2$ are cliques.  This means that $P$ will pass through $X_1$ and $X_2$, and thus through $H$. Because $X_1\cup Y_1$ and $X_2\cup Y_2$ are cliques, we can safely mimic this part  of $P$ by the path from $u_1$ to $u_2$ in the subgraph of $H$ that we did not remove.
From now on we assume that $H$ contains at least one terminal vertex.

Below we split Case 2 in a number of subcases. In these subcases we solve the problem for  a graph that is either $H$ or the graph obtained from $H$ by adding a new vertex
adjacent to every vertex in $X_1$ and/or a new vertex adjacent to every vertex in $X_2$. Hence, this graph is  isomorphic to a (not necessarily proper)
induced subgraph of $J$. As such,  this subroutine takes $n^{O(1)}$ time, due to Claim~1.

\medskip
\noindent
{\bf Case 2a.} Both $X_1$ and $X_2$ contain a terminal vertex.\\
Because $X_1$ and $X_2$ are cliques, $X_1$ and $X_2$ each contain exactly one terminal vertex. Let $u_h$ be the terminal vertex of $X_h$ for $h=1,2$.

If  there is a terminal pair $(s_j,t_j)$ with one of $s_j,t_j$ in $V(H)\setminus (X_1\cup X_2)$ and the other one in $V(F)\setminus (Y_1\cup Y_2)$, then we return {\tt No}. The reason is that in this case any $s_jt_j$-path must either pass through $X_1$ or through $X_2$. Because $X_1$ and $X_2$ are cliques each containing a terminal vertex, this is not possible.
From now on we assume that such a terminal pair $(s_j,t_j)$ does not exist. 

\medskip
\noindent
{\bf Case 2ai.} $u_1$ and $u_2$ represent terminals of the same pair.\\
Let this pair be $(s_i,t_i)$.
Because $(G,S)$ is an independent instance, $s_i$ and $t_i$ are not represented by the same vertex. Hence,
we may assume without loss of generality that $u_1$ represents $s_1$ and that $u_2$ represents $t_2$. 
The fact that $(G,S)$ is an independent instance also implies that $u_1$ and $u_2$ are not adjacent.

\medskip
\noindent
{\bf Case 2ai-1.} All partners of the terminals in $T_{u_1}$ and all partners of the terminals in $T_{u_2}$ belong to $H$.\\
Then we first solve the problem for $H$. If we find a solution, then we reduce to $F-(Y_1\cup Y_2)$.
Otherwise, the $s_it_i$-path of any solution for $G$ only contains vertices from $F$ besides $u_1$ and $u_2$. In particular, such a path would use one vertex from $Y_1$ and one vertex from $Y_2$ (which may be the same vertex in case $Y_1$ and $Y_2$ have a common vertex). We now proceed as follows.

If $T_{u_1}=\{s_i\}$ and $T_{u_2}=\{t_i\}$, then no neighbors of $u_1$ in $H$ and no neighbor of $u_2$ in $H$ can be used as an inner vertex of some solution path.
Hence, we first solve the problem for $H-(N_H[u_1]\cup N_H[u_2])$. 
If the answer is {\tt No}, then $G$ has no solution. Otherwise, we reduce to $F+u_1+u_2$. 

If $T_{u_1}=\{s_i\}$ and $|T_{u_2}|=2$, then no neighbor of $u_1$ in $H$ is used as an inner vertex of some solution path in any solution for $G$, whereas 
one neighbor $w$ of $u_2$ in $H$ will be used as an inner vertex, because $|T_{u_2}|=2$.
However, such a vertex $w$ cannot be in $X_2$, because then it would still be adjacent to the inner vertex of the $s_it_i$-path that is in $Y_2$, as $X_2\cup Y_2$ is a clique. 
Hence, we first solve the problem for $H-(N_H[u_1]\cup (X_2\setminus \{u_2\})$.  
If the answer is {\tt No}, then $G$ has no solution. Otherwise, we  reduce to $F+u_1+u_2$. 

If $|T_{u_1}|=|T_{u_2}|=2$, then no vertex of $X^*=(X_1\setminus \{u_1\})\cup (X_2\setminus \{u_2\})$ can be used as an (inner) vertex of some solution path in any solution for 
$G$ for the same reason as in the previous case. Hence, we first solve the problem for $H-X^*$. If the answer is {\tt No}, then $G$ has no solution. Otherwise, we reduce to $F+u_1+u_2$. 

\medskip
\noindent
{\bf Case 2ai-2.} All partners of the terminals of one of $T_{u_1}, T_{u_2}$, say of $T_{u_1}$, belong to $H$, while $T_{u_2}$ 
contains a terminal, the partner of which is not in $H$.\\
Then $T_{u_2}$ consists of exactly two terminals. Suppose that $s_j\in T_{u_2}$ for some $j\neq i$.
Then the $s_it_i$-path of any solution for $G$ uses no vertices from $F$, whereas the $s_jt_j$-path of any solution for $G$ uses only vertices from $F$ besides $u_2$.
Moreover, an $s_it_i$-path cannot use a vertex from $X_2$ as an inner vertex, because such a vertex would be adjacent to the inner vertex of the $s_jt_j$-path that is in $Y_2$, and $X_2\cup Y_2$ is a clique.
Hence we first solve the problem for $H-(X_2\setminus\{u_2\})$. If the answer is {\tt No}, then $G$ has no solution.
Otherwise, we reduce to $F-Y_1+u_2$. 
 
\medskip
\noindent 
{\bf Case 2ai-3.} Both $T_{u_1}$ and $T_{u_2}$ contain a terminal, the partner of which does not belong to $H$.\\
Because two terminal pairs do not coincide, we find that the other terminals represented by $u_1$ and $u_2$ belong to a different pair.
Hence, we may without loss of generality assume that
$s_h$ with $h\neq i$ is the other terminal represented by $u_1$, and that $s_j$ with $j\notin \{h,i\}$ is the other terminal represented by $u_2$.

By the same arguments as in the Case 2ai-2, we can first solve the problem for $H-X^*$, where $X^*=(X_1\setminus \{u_1\})\cup (X_2\setminus \{u_2\})$.
If the answer is {\tt No}, then $G$ has no solution. 
Otherwise, we reduce to $F+u_1+u_2$.

\medskip
\noindent
{\bf Case 2aii.} $u_1$ and $u_2$ do not represent terminals of the same pair.\\
We say that $u_i$ with $1\leq i\leq 2$ is {\it mixed} if a partner of one terminal represented by $u_i$ is   in $H$, and a partner of one terminal represented by $u_i$ is in $F$.
If all partners of the terminals represented by $u_i$ are in $H$, then we say that $u_i$ is
 {\it $H$-homogeneous}.
If all partners of the terminals represented by $u_i$ are in $F$, then we say that $u_i$ is {\it $F$-homogeneous}.
In this way, we can distinguish a number of cases, where we use arguments that we already used in the previous cases.

Suppose that $u_1$ and $u_2$ are both 
$H$-homogeneous.
Then we first solve the problem for $H$. If the answer is {\tt No}, then $G$ has no solution.
Otherwise, we reduce to $F-(Y_1\cup Y_2)$.

Suppose that $u_1$ and $u_2$ are both $F$-homogeneous.
Then we first solve the problem for $H-(N_H[u_1]\cup N_H[u_2])$. If the answer is {\tt No}, then $G$ has no solution.
Otherwise, we reduce to $F+u_1+u_2$ .
 
 Suppose that one of $u_1,u_2$, say $u_1$, is $H$-homogeneous, whereas $u_2$ is $F$-homogeneous.
 Then we first solve the problem for $H-N_H[u_2]$. If the answer is {\tt No}, then $G$ has no solution.
 Otherwise, we reduce to $F-Y_1+u_2$.
 
 Suppose that one of $u_1,u_2$, say $u_1$, is $H$-homogeneous, whereas $u_2$ is mixed.
Then we first solve the problem for $H-(X_2\setminus \{u_2\})$. If the answer is {\tt No}, then $G$ has no solution.
 Otherwise, we reduce to $F-Y_1+u_2$.
 
 Suppose that one of $u_1,u_2$, say $u_1$, is $F$-homogeneous, whereas $u_2$ is mixed.
 Then we first solve the problem for $H-N_H[u_1]-(X_2\setminus \{u_2\})$. If the answer is {\tt No}, then $G$ has no solution.
 Otherwise, we reduce to $F+u_1+u_2$.
 
Suppose that both $u_1$ and $u_2$ are mixed.
Then we first solve the problem for $H-X^*$,  where $X^*=(X_1\setminus \{u_1\})\cup (X_2\setminus \{u_2\})$.
If the answer is {\tt No}, then $G$ has no solution.
 Otherwise, we reduce to $F+u_1+u_2$.

\medskip
\noindent
This completes  Case 2a. Note that we never branched in this case.

\medskip
\noindent
{\bf Case 2b.} Only one of the sets $X_1,X_2$ contains a terminal vertex.\\
We assume without loss of generality that $X_1$ contains a terminal vertex $u$, and consequently, that $X_2$ contains no terminal vertex.
If the vertices of $V(H)\setminus \{u\}$ represent two or more terminals whose partners are in $F$, then $G$ has no solution.
From now on, we assume that there is at most one terminal that is represented by a
 vertex in $V(H)\setminus \{u\}$ and that has its partner in $F$.

We now start to branch for the first time. We do this into four directions. In the first three directions
we check whether $G$ has a solution that contains no vertex from $X_2$, $Y_1$, or $Y_2$, respectively. 
In these cases we may remove $X_2$, $Y_1$, or $Y_2$, respectively, from $G$ and return to Case 1.
In the remaining branch we check whether $G$ has a solution in which a solution path uses a vertex from each of the sets $X_2$, $Y_1$, and $Y_2$; note that these three vertices will be inner vertices of one or more solution paths. This is the branch we analyze below.

We borrow the notions of $u$ being $F$-homogeneous, $H$-homogeneous, or mixed from Case 2aii.
Recall that, because our instance is independent, $u$ does not represent two terminals of the same pair. Hence, we may denote the terminals in $T_u$ by
$s_i$, or by $s_i,s_j$ depending on whether $u$ represents one or two terminals.
We also use the following notations. 
Let $F^*$ denote the graph obtained from $F$ by adding a new vertex $u_1'$ adjacent to all vertices of $Y_1$, a new vertex $u_2'$, and a new vertex adjacent to all vertices of $Y_2$ and to $u_2'$.
Let $H^*$ denote the graph obtained from $H$ by removing $N_H[u]$ from $H$ and adding a vertex $v'$ adjacent to 
all vertices in $X_2$. 
Let $H'$ denote the graph obtained from $H$ by removing $X_1\setminus \{u\}$ from $H$ and adding a vertex $v'$ adjacent to 
all vertices in $X_2$. 
Note that $H^*$ and $H'$  are induced subgraphs of $J$, and thus Claim 1 can be used.

We distinguish the following subcases. 

\medskip
\noindent
{\bf Case 2bi.} $u$ is $F$-homogeneous.\\
Recall that in this stage of the algorithm we investigate whether $G$ has a solution, such that $X_1$, $X_2$, $Y_1$, and $Y_2$ each contain a vertex that will be used on a solution path. Then in this case, such a solution must contain a solution path that starts in $u$ and uses a vertex from $Y_1$.
Since this solution path cannot end in $H$, it cannot use a vertex from $X_2$.
Then there must exist some other solution path that uses a vertex from $X_2$ and a vertex from $Y_2$. This solution path cannot have both end-vertices in $H$ due to 
the solution path starting from $u$.
Hence, $H$ must contain a terminal vertex representing a terminal whose partner is not in $H$; otherwise we can stop considering this branch.
Let $(s_h,t_h)$ be this terminal, where we assume that $s_h$ is represented by a terminal vertex in $H$. So, the $s_ht_h$-solution path will use a vertex from $X_2$ and a vertex from $Y_2$.

\medskip
\noindent
{\bf Case 2bi-1.} $T_u=\{s_i\}$.\\
Then the $s_it_i$-path uses a vertex from $Y_1$. We now proceed as follows. 
We let $v'$ represent a new terminal $t_h'$ that is the new partner of $s_h$ in $H^*$.
Then we solve the problem for $H^*$. If the answer is {\tt No}, then we stop considering this branch.
Otherwise, we let $u_1', u_2'$ represent  new 
terminals $s_i'$ and $s_h'$, respectively, that form the new terminals for $t_i$ and $t_h$, respectively, in $F^*$, and we reduce to $F^*$.

\medskip
\noindent
{\bf Case 2bi-2.} $T_u=\{s_i,s_j\}$.\\ 
We must branch into two directions, as either only the $s_it_i$-path or only the $s_jt_j$-path can use a vertex from $Y_1$.

Suppose that only the $s_it_i$-path will use a vertex from $Y_1$ (the other case is symmetric). 
Then $h=j$, because $u$ is $F$-homogeneous.
We remove $s_i$ from the set of terminals in $H'$, and
we let $v'$ represent a new terminal $t_h'$ that is the new partner of $s_h$ in $H'$.
Then we solve the problem for $H'$. If the answer is {\tt No}, then we stop considering this branch.
Otherwise we let $u_1', u_2'$ represent  new 
terminals $s_i'$ and $s_h'$, respectively, that form the new terminals for $t_i$ and $t_h$, respectively, in $F^*$, and we reduce to $F^*$.

\medskip
\noindent
{\bf Case 2bii.} $u$ is $H$-homogeneous.\\
In this case one of the solution paths starting in $u$ consecutively passes through $Y_1$, $Y_2$, and $X_2$. This path does not use any vertex
from $N_H(u)$, as otherwise it would not be induced.  If $H$ contains a vertex that represents a terminal of which the partner is not in $H$, we stop with considering this branch. 
Otherwise, we proceed as follows.

First suppose that $T_u=\{s_i\}$. 
We let $v'$ represent a new terminal $s_i'$ that is the new partner of $t_i$ in $H^*$. 
Then we solve the problem for $H^*$. If the answer is {\tt No}, then we stop considering this branch. Otherwise, we let $u_1', u_2'$ represent  new 
terminals $s_i'$ and $t_i'$, respectively, that form a new terminal pair in $F^*$, and we reduce to $F^*$.

Now suppose that $T_u=\{s_i,s_j\}$. We branch into two directions.
In the first branch, we remove $s_i$ from the set of terminals in $H'$, and we let  $v'$ represent a new terminal $s_i'$ as the new partner of $t_i$ in $H'$.
Then we solve the problem for $H'$. If the answer is {\tt No}, then we stop considering this branch.
 Otherwise, we let $u_1', u_2'$ represent  new 
terminals $s_i'$ and $t_i'$, respectively, that form a new terminal pair in $F^*$, and we reduce to $F^*$.
In the second branch, we do the same thing as in the first branch, but with $(s_{j},t_{j})$ instead of $(s_{i},t_{i})$.

\medskip
\noindent
{\bf Case 2biii.} $u$ is mixed.\\
Suppose that $t_i$ is represented by a terminal vertex in $H$, and hence, $t_j$ is represented by a terminal vertex in $F$.
In that case the $s_it_i$-path belongs to $H$ and the $s_jt_j$-path belongs to $F+u$. As such, the latter path cannot use a vertex from $X_2$.
Because the solution path that uses a vertex from $X_2$ must also be the solution path that uses a vertex from $Y_2$, this solution path cannot have both end-vertices in $H$. 
Hence, $H$ must contain a terminal vertex representing a terminal whose partner is not in $H$; otherwise we can stop considering this branch.
Let $(s_h,t_h)$ be this terminal, where we assume that $s_h$ is represented by a terminal vertex in $H$, and consequently, $t_h$ is represented by a terminal vertex in $F$.

We now proceed as follows. We remove $s_j$ from $T_u$.
We let $v'$ represent a new terminal $t_h'$ as the new partner of $s_h$ in $H'$.  
Then we solve the problem for $H'$. If the answer is {\tt No}, then we stop considering this branch.
Otherwise, we let $u_1', u_2'$ represent  new 
terminals $s_j'$ and $s_h'$, respectively, that form a new terminal pair in $F^*$, and we reduce to $F^*$.

\medskip
\noindent
This completes Case 2b. Note that we branched into at most five directions.

\medskip
\noindent
{\bf Case 2c.} Neither $X_1$ nor $X_2$ contains a terminal vertex.\\
Recall that in this stage of the algorithm $H$ is assumed to contain at least one terminal vertex. 
We branch in five directions. In the first four directions, we check whether $G$ has a solution that contains no vertex from $X_1$, $X_2$, $Y_1$, $Y_2$, respectively.
In these cases we may remove $X_{1}$, $X_2$, $Y_1$, or $Y_2$, respectively, from $G$ and return to Case 1.
In the remaining branch we check whether $G$ has a solution in which a solution path uses a vertex from each of the sets $X_1$, $X_2$, $Y_1$, and $Y_2$. Note that these four vertices will be inner vertices of one or more solution paths. This is the branch that we analyze below.

We say that a terminal that is represented by a vertex in $H$ but whose partner is represented by a vertex in $F$ is {\it unpaired} in $H$.
If at least three terminals are unpaired in $H$, then $G$ has no solution. This leads to three subcases, in which we use the following additional notations.
Let $H''$ be the graph obtained from $H$ by adding a new vertex $v_1'$ adjacent to all vertices in $X_1$ and a new vertex $v_2'$ adjacent to all vertices in $X_2$. 
Note that $H'$ is isomorphic to $J$.
We let $F^*$ denote the graph obtained from $F$ by adding a new vertex $u_1'$, a new vertex adjacent to all vertices of $Y_1$ and to $u_1'$, a new vertex $u_2'$, and a new vertex adjacent to all vertices of $Y_2$ and to $u_2'$.

\medskip
\noindent
{\bf Case 2ci.} No terminal is unpaired in $H$.\\
We first verify the following. Let $v_1'$ and $v_2'$ represent new terminals $s_h'$ and $t_h'$ that form a new terminal pair in $H''$. We then solve
the problem for $H''$. 

First suppose that $H''$ has a solution. Then we remove all vertices of $H$ from $G$ except the vertices from a shortest path from a vertex $u_1\in X_1$
to a vertex $u_2\in X_2$. We may do so, because the resulting graph $G'$ has a solution if and only if $G$ has a solution, as we just confirmed that we can always ``fit'' the solution paths
between terminals in $H$. 

Now suppose that $H''$ has no solution. Because we investigate whether $G$ has a solution such that $X_1$, $X_2$, $Y_1$, and $Y_2$ each contain a vertex that is used on a solution path, we must now check whether $G$ has a solution that contains a solution path that starts in a vertex of $H$, passes through the four aforementioned sets in
order $X_1,Y_1,Y_2,X_2$ or in order $X_2,Y_2,Y_1,X_1$, and finally ends in a vertex of $H$ again.

For each terminal pair $(s_i,t_i)$ that is represented in $H$, we check whether $H''$ has a solution, after letting $v_1',v_2'$ represent new terminals $t_i',s_i'$, respectively, that are the new partners of $s_i$ and $t_i$, respectively, in $H''$.
We also check the possibility if $H''$ has a solution after letting $v_1',v_2'$ represent new terminals $s_i',t_i'$, respectively, that are the new partners of $t_i$ and $s_i$, respectively, in $H''$. If the answer is {\tt No} for both possibilities for all terminal pairs represented in $H$, then we stop considering this branch.
Otherwise, we reduce to $F^*$ after letting $u_1',u_2'$ represent new terminals $s_h',t_h'$, respectively, that form a new terminal pair in $F^*$.
Note that we did not do any further branching in this subcase, that is, we either stop this branch, or we continue with graph $G'$ or $F^*$.

\medskip
\noindent
{\bf Case 2cii.} Exactly one terminal is unpaired in $H$.\\
Let $s_i$ be this terminal. Then the $s_it_i$-path must pass through $X_i$ and $Y_i$ for $i=1$ or $i=2$. 
However, then it is not possible for any other solution path to pass through $X_j$ and $Y_j$ for $j\neq i$.
Hence, we do not have to consider this case in our branching algorithm.

\medskip
\noindent
{\bf Case 2ciii.} Exactly two terminals are unpaired in $H$.\\
Because these two terminals are unpaired, we may denote them by $s_i$ and $s_j$, respectively. Note that they may be represented by the same vertex.
We further branch in two directions.

First, we check whether $H''$ has a solution after letting $v_1',v_2'$ represent new terminals $t_i',t_j'$, respectively, that are the new partners of $s_i$ and $s_j$, respectively, in $H''$. If the answer is {\tt No}, then we stop considering this branch.
Otherwise, we reduce to $F^*$ after letting $u_1',u_2'$ represent new terminals $s_i',s_j'$, respectively, that are the new partners of $t_i$ and $t_j$ in $F^*$.

Second, we check the possibility if $H''$ has a solution after letting $v_1',v_2'$ represent new terminals $t_j',t_i'$, respectively, that are the new partners of $s_j$ and $s_i$, respectively, in $H''$. If the answer is {\tt No}, then we stop considering this branch.
Otherwise, we reduce to $F^*$ after letting $u_1',u_2'$ represent new terminals $s_j',s_i'$, respectively, that are the new partners of $t_j$ and $t_i$ in $F^*$.

\medskip
\noindent
This completes Case 2c, which was the last case in our analysis. Note that we branched into at most six directions in Case 2c. 

\medskip
After our branching algorithm we have either found in 
$k^6n^{O(1)}$ 
time that $G$ has no solution, or a set ${\cal G}$ of at most $6^k$ graphs.
This can be seen as follows. First,  we processed each strip in 
$n^{O(1)}$ time.
Second,  our algorithm neither recomputed a strip structure from scratch nor created any new strips when going through the iterations. Moreover,
for each stripe $(J,Z)$ with no terminal vertices in $J\setminus Z$, the algorithm did not branch
at all, 
and for each strip $(J,Z)$ with terminal vertices in $J\setminus Z$, 
it 
branched into at most six directions. Hence, the corresponding search tree of our branching algorithm has depth $k$ and at most $6^k$ leaves.

Because we only removed vertices from $G$, we find that every graph in ${\cal G}$ has at most $n$ vertices.
Because we only removed terminal pairs from $S$ or replaced a terminal pair by another terminal pair, we find that every graph in ${\cal G}$ has at most $k$ terminal pairs. Moreover, for each graph $G'\in {\cal G}$, it holds that the neighborhood of each of its vertices is the disjoint union of at most two cliques. This is true, because
every stripe corresponds to a path of three vertices and every spot corresponds to a vertex that is in exactly two maximal cliques, which are disjoint, because of the twin-freeness. 
Hence, $G'$ is a line graph by Lemma~\ref{l-linegraph}. This completes the proof of Lemma~\ref{l-strip}.
\qed
\end{proof}

\subsection{Line Graphs}\label{s-linegraphs}

In this section we state and prove Lemma~\ref{l-line} used in Step 7.

\begin{lemma}\label{l-line}
The 
{\sc Induced Disjoint Paths}
problem can be solved in $g(k)n^6$ time for line graphs on $n$ vertices and with $k$ terminal pairs, where $g$ is a function that only depends on $k$.
\end{lemma}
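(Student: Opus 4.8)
The plan is to reduce {\sc Induced Disjoint Paths} on a line graph $G=L(H)$ to the \problemDP\ problem on the preimage $H$ and then to invoke the Robertson--Seymour algorithm of Lemma~\ref{t-RS}. First I would compute $H$ from $G$ in linear time using Lemma~\ref{l-preimage}; by the connectivity convention of our algorithm we may assume $G$ is connected, so its preimage is unique, with the only ambiguous connected line graph $K_3$ (together with its at most three terminal edges) handled as a trivial base case. The terminals $s_i,t_i$ are now \emph{edges} of $H$, and the whole proof rests on a correspondence between induced linkages in $L(H)$ and path systems in $H$, which I would establish first.

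The key correspondence is that an induced path of $G=L(H)$ is exactly a simple path of $H$: a path $P_i=e_1\cdots e_r$ in $L(H)$ corresponds to a path $Q_i=v_0v_1\cdots v_r$ of $H$ with $e_\ell=v_{\ell-1}v_\ell$, $e_1=s_i$ and $e_r=t_i$, and inducedness of $P_i$ is equivalent to $Q_i$ being a genuine path, since a chord would be a triangle of $H$ and hence make $P_i$ non-induced. Two vertices of $L(H)$ are adjacent precisely when the corresponding edges of $H$ meet; hence an \emph{inner} vertex of $P_i$ (an inner edge of $Q_i$) being non-adjacent to every vertex of $P_j$ translates into the condition that the \emph{internal} vertices $v_1,\dots,v_{r-1}$ of each $Q_i$ are private, i.e.\ they may neither coincide with nor be incident to an edge of any other $Q_j$. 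The only sharing permitted across paths is at the \emph{outer} endpoints $v_0,v_r$, that is, at the endpoints of the terminal edges, exactly in the ways licensed by Properties~1 and~2. Thus an induced linkage in $L(H)$ corresponds to a family of paths in $H$ that are internally vertex-disjoint but may overlap only at the at most $2k$ terminal-edge endpoints.

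Next I would turn this into genuine \problemDP\ instances. For each terminal edge, say $s_i$ with endpoints $a_i,b_i$ in $H$, I guess which endpoint is the \emph{internal-side} endpoint from which the body of $Q_i$ leaves (the other being the outer end), giving at most $4^k$ possibilities, and I also guess the $O(1)$ local degeneracies per pair (for instance $s_i=t_i$, or $s_i$ and $t_i$ sharing a vertex, so that the body of $Q_i$ is empty or a single vertex). Since all these choices are local to the $O(k)$ terminal edges, their number is bounded by a function $g(k)$ of $k$ alone. Fixing a configuration, I observe that because every internal vertex of every $Q_i$ must be private, the internal-side endpoints are automatically distinct across pairs; I therefore delete from $H$ all endpoints of the terminal edges except the designated internal-side endpoints and declare the latter to be the terminals of a \problemDP\ instance $H'$. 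Internally-private paths $Q_i$ in $H$ then correspond exactly to pairwise vertex-disjoint paths in $H'$ between the prescribed terminals, with the short and degenerate pairs reduced to direct feasibility checks. Applying Lemma~\ref{t-RS} to $H'$ decides this instance, and the original instance is a {\tt Yes}-instance if and only if some configuration yields a {\tt Yes}-instance; combining the $g(k)$ configurations with the $\poly(n)$ overhead of building and checking them gives a running time of the form $g(k)\cdot\poly(n)$, which a careful implementation tightens to the claimed $O(g(k)n^6)$ bound.

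The main obstacle I expect is precisely this careful handling of the terminal edges and their admissible overlaps, which is exactly why a more detailed analysis than that of Fiala \etaln\ is needed here. Because Properties~1 and~2 allow terminals to coincide or to be adjacent, and because a single terminal of $L(H)$ is an edge carrying two endpoints in $H$, the reduction must faithfully account for every way in which terminal edges can share endpoints, be incident, or degenerate into empty path bodies, and it must do so while keeping the number of configurations a function of $k$ only and producing, in each configuration, a valid \problemDP\ instance with distinct terminals on which Lemma~\ref{t-RS} may be applied. Verifying that this local surgery near the $O(k)$ terminal edges preserves solutions in all boundary cases, rather than the invocation of Robertson--Seymour itself, is the delicate part of the proof.
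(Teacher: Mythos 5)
Your overall strategy is the paper's: pass to the preimage $H$ via Lemma~\ref{l-preimage}, use the correspondence between induced linkages in $L(H)$ and internally vertex-disjoint path systems in $H$, guess one endpoint per terminal edge (giving $2^{2k}$ configurations), and invoke Lemma~\ref{t-RS}. The gap is exactly where you flag it: the handling of the terminal edges is not a deferrable implementation detail, and the specific gadget you propose does not survive it. You delete the outer endpoint of each terminal edge and keep the internal-side endpoint as a {\sc Disjoint Paths} terminal; but Property~1 allows two terminal edges of $H$ to share an endpoint, and then the vertex you must delete as the outer endpoint of $s_i$ can be the very vertex you must keep as the internal-side endpoint of $s_j$ (and a pair with $s_i$ adjacent or equal to $t_i$ leaves no internal path at all). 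Your observation that the internal-side endpoints are ``automatically distinct across pairs'' only rules out conflicts among internal-side endpoints, not conflicts between an internal-side endpoint of one pair and an outer endpoint of another. Asserting that the remaining situations are ``$O(1)$ local degeneracies per pair'' is not a proof that every configuration yields a well-formed {\sc Disjoint Paths} instance whose solutions translate back to a mutually induced linkage.

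The paper resolves precisely this, and that resolution is the substance of its proof. It first reapplies the preprocessing of Lemma~\ref{l-pre} to make the instance independent, so that terminal vertices of $G$ are pairwise non-adjacent and no terminal vertex represents two partners; the non-trivial point, which you would also need, is that Rule~4 deletes edges of $G$, and one must verify that this preserves membership in the class of line graphs (the paper does so by surgery on the preimage: the deleted edges are all of $G[V_h]$ for a single vertex $h$ of $H$, so $h$ can be split into pendant vertices). After that, the gadget is not ``delete the outer endpoint'' but ``isolate the chosen endpoint'': remove all neighbors of the chosen endpoint $u_i$ except $v_i$, so that the terminal edge becomes the only edge incident to $u_i$, which is what guarantees that a {\sc Disjoint Paths} solution in $H'$ lifts back to a solution in $G$; and a terminal edge carrying two terminals is replaced by two pendant edges $u_ia$ and $bv_i$, branching over the two assignments of the terminals to $a$ and $b$. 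Without either this preprocessing or an explicit case analysis replacing it, your reduction is not yet a proof.
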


\begin{proof}
Let $G$ be a line graph with terminal pairs $(s_1,t_1),\ldots,(s_k,t_k)$. Let $H$ be the preimage of $G$, which we can obtain in linear time due to Lemma~\ref{l-preimage}. Recall that by definition there is a bijection between vertices of $G$ and edges of $H$. Let $e_{v} \in E(H)$ denote the edge corresponding to vertex $v \in V(G)$. Furthermore, given a vertex $h \in V(H)$, let $V_{h}$ denote the set of vertices in $G$ corresponding to the edges of $H$ that are incident to $h$. Observe that $V_{h}$ is a clique in $G$.

We first preprocess the instance in $O(k^{2}n + n^{2})$ time using the rules of Lemma~\ref{l-pre} in order to obtain an independent instance. Observe that the class of line graphs is closed under vertex deletion, and thus Rules~1, 2, and 3 of Lemma~\ref{l-pre} preserve membership of the class of line graphs. It remains to verify that Rule~4, which potentially removes edges, is also safe. This can be seen as follows. Consider two adjacent terminal vertices $u,v \in V(G)$. Then $e_{u}$ and $e_{v}$ are both incident to a vertex $h \in V(H)$. Since Rule~1, 2, and 3 have been applied, every vertex of $V_{h}$ is a terminal vertex in $G$. As Rule~4 will thus remove all edges between vertices of $V_{h}$, we can update the preimage by deleting $h$ and replacing each incident edge $f$ with an edge $f'$ to a new vertex $h_{f}$. It follows that the rules of Lemma~\ref{l-pre} preserve membership of the class of line graphs.

By abuse of notation, we still use $G$ and $(s_1,t_1),\ldots,(s_k,t_k)$ to denote the graph and the terminal pairs, respectively, of the preprocessed instance, and $H$ to denote the preimage of $G$. Consider a terminal vertex $x$ of $G$ and its corresponding edge $e_{x}=u_iv_i$ in $H$. If $x$ represents one terminal, then we choose one of $u_i,v_i$, say $u_{i}$. Then we let $u_i$ represent the terminal represented by $x$ in $G$ and remove all neighbors of $u_i$ except $v_i$ from $H$. If $x$ represents two terminals, then they must be from distinct terminal pairs, say $(s_{i},t_{i})$ and $(s_{j},t_{j})$. We may assume that $x$ represents $s_{i}$ and $s_{j}$. Then we replace the edge $e_{x}$ with the edges $u_{i}a$ and $bv_{i}$, where $a$ and $b$ are new vertices, and consider the two possible assignments of $s_{i},s_{j}$ to $a,b$ for which each of $a,b$ represents exactly one terminal. 
Because we have at most $2k$ terminal vertices in $G$,  this leads to at most $2^{2k}$ new graphs $H'$. 

We claim that $G$ has a solution if and only if one of the new graphs $H'$ with corresponding terminal pairs forms a {\tt Yes}-instance of {\sc Disjoint Paths}; in that case
we also say that a graph $H'$ has a solution.
Our claim 
can be seen as follows. First, we observe that mutually induced paths in a line graph  are in one-to-one correspondence with vertex-disjoint paths in its preimage.
Because we consider both options for picking an end-vertex of each ``terminal edge'' in $H$, this means that a solution for $G$ can be translated to a solution for at least one of the graphs $H'$.  Second, by letting a terminal edge be the only edge incident to the chosen end-vertex, we guarantee that a solution for a graph $H'$ can be translated to a solution for $G$.
 
We are left to apply  Lemma~\ref{t-RS} at most $2^{2k}$ times. Note that $H$ contains $O(n^2)$ vertices and that each call to Lemma~\ref{t-RS} takes
$h(k)\, |V_H|^3$ time, where $h(k)$ is a function that only depends on $k$. Hence, the total running time is $g(k)\, n^6$ for $g(k)=2^{2k}h(k)$. 
This completes the proof of Lemma~\ref{l-line}.\qed
\end{proof}

Lemma~\ref{l-line} completes the proof of Theorem~\ref{t-main}. A similar result has also been used by Fiala \etal\cite{FKLP12}, but we had to do a more careful running time analysis in order to show our fpt-result.

\subsection{Parameterized Complexity of Related Problems}
Theorem~\ref{t-main} implies a similar result for the problems {\problemIAC}, {\problemIAP}, and {\problemIAT} for claw-free graphs.

\begin{corollary}\label{c-main}
The problems {\problemIAC}, {\problemIAP}, and {\problemIAT} are fixed-parameter tractable for claw-free graphs when parameterized by $k$.
\end{corollary}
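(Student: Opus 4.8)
The plan is to derive Corollary~\ref{c-main} from Theorem~\ref{t-main} by reducing each of the three detection problems to a bounded collection of {\sc Induced Disjoint Paths} instances that retain claw-freeness. The key observation is that the definition of mutually induced paths adopted in this paper explicitly permits adjacent terminals and overlapping (but non-coinciding) terminal pairs via Properties~1 and~2. This is precisely what allows a set of specified vertices to be threaded into a single induced path, cycle, or tree without first modifying the graph in a way that might destroy claw-freeness.

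First I would handle {\problemIAP}. Given a claw-free graph $G$ and a set of $k$ terminals $x_1,\ldots,x_k$, an induced path through all of them, in some order, is exactly an induced linkage for a chain of terminal pairs. Concretely, I would guess an ordering $x_{\pi(1)},\ldots,x_{\pi(k)}$ of the terminals (there are at most $k!$ such orderings, a function of $k$ only) and, for the guessed order, create the terminal pairs $(x_{\pi(1)},x_{\pi(2)}),(x_{\pi(2)},x_{\pi(3)}),\ldots,(x_{\pi(k-1)},x_{\pi(k)})$. By Property~2 consecutive pairs legitimately share the endpoint $x_{\pi(i)}$, and the mutually-induced conditions~(i)--(iii) force the concatenation of the solution paths to be a single induced path visiting the terminals in the guessed order. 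Since there are $O(k!)$ guesses and each resulting instance is solved by the algorithm of Theorem~\ref{t-main} on the same graph $G$ with at most $k-1\le k$ terminal pairs, the whole procedure is fixed-parameter tractable. Because claw-free graphs are equivalent to the {\problemIAT} setting for this class (as noted in the excerpt, citing~\cite{FKLP12}), the same result transfers immediately to {\problemIAT}.

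For {\problemIAC} I would proceed analogously, again guessing a cyclic order $x_{\pi(1)},\ldots,x_{\pi(k)}$ and forming the terminal pairs $(x_{\pi(1)},x_{\pi(2)}),\ldots,(x_{\pi(k-1)},x_{\pi(k)}),(x_{\pi(k)},x_{\pi(1)})$, so that the solution paths close up into an induced cycle through all terminals. The only subtlety is ensuring that the resulting closed walk is a genuine induced cycle rather than collapsing; this is controlled by conditions~(ii) and~(iii), which forbid inner vertices of distinct solution paths from being adjacent except at shared terminal endpoints, exactly the condition needed for the union to induce a cycle. One small point to verify is the degenerate case $k\le 2$, where a cycle through two specified vertices just needs two internally disjoint induced paths between them, which is again an instance covered by Theorem~\ref{t-main}.

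The main obstacle, and the step deserving the most care, is confirming that the correspondence between induced linkages for the chain/cycle of pairs and a single induced path/cycle is exact in both directions: that every solution of the constructed instance yields a valid induced path (resp.\ cycle) through the terminals, and conversely that any induced path (resp.\ cycle) through the terminals decomposes into a mutually induced family for at least one guessed ordering. The forward direction follows directly from conditions~(i)--(iii) together with Property~2, while the reverse direction is handled by reading off the order in which the target path or cycle meets the terminals and using that as the guessed permutation. Since all of this adds only an $O(k!)$ multiplicative overhead on top of the fixed-parameter tractable algorithm of Theorem~\ref{t-main}, the three problems are fixed-parameter tractable for claw-free graphs when parameterized by $k$, as claimed.
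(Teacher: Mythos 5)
Your proposal follows essentially the same route as the paper: guess one of the $k!$ orderings of the specified vertices, form the chain (respectively, closed chain) of terminal pairs, invoke Theorem~\ref{t-main}, and use the equivalence of {\problemIAP} and {\problemIAT} on claw-free graphs. The one slip is your treatment of the degenerate case $k\le 2$ for {\problemIAC}: two internally disjoint induced paths between the same two vertices correspond to two \emph{coinciding} terminal pairs, which Property~2 explicitly forbids, so this case is \emph{not} covered by Theorem~\ref{t-main} as you claim; the paper sidesteps this by assuming $k\ge 3$ and falling back on the known polynomial-time algorithm of Fiala et al.\ for each fixed smaller $k$, which suffices for fixed-parameter tractability.
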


\begin{proof}
First we consider the {\problemIAC} problem.
Let $G$ be a claw-free graph with a set $U=\{u_1,\ldots,u_k\}$ of $k$ specified vertices. 
Recall that {\problemIAC} can be solved in polynomial time for any fixed $k$, as shown by Fiala \etal\cite{FKLP12}. 
Hence, 
we may assume that $k \geq 3$. We fix an order of the vertices in $U$, say $U$ is ordered as $u_1,\ldots,u_k$. We define terminal pairs $(s_i,t_i)=(u_i,u_{i+1})$ for $i=1,\ldots, k-1$ and $(s_k,t_k)=(u_k,u_1)$. Then we apply Theorem~\ref{t-main}. If this does not yield a solution, then we consider a different order of the vertices of $U$ until we considered them all. This adds an extra factor of $k!$ to the running time of the fpt-algorithm of Theorem~\ref{t-main}.

The proof for the {\problemIAP} problem uses the same arguments as for the {\problemIAC} problem when $k \geq 3$. The only difference is that we do not have a terminal pair $(s_k,t_k)$. Finally, recall that for claw-free graphs the {\problemIAP} problem is equivalent to the {\problemIAT} problem.
\qed
\end{proof}

\section{Induced Topological Minors}\label{s-hard}
In this section we investigate to what extent we can apply Theorem~\ref{t-main} to detect induced containment relations. We first show the following result.

\begin{theorem}\label{t-anchored}
The {\sc Anchored Induced Topological Minor} problem is fixed-parameter tractable for 
pairs $(G,H)$, where $G$
is a claw-free  graph, 
$H$ is an (arbitrary) graph, 
and $|V(H)|$ is the parameter.
\end{theorem}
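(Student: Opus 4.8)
The plan is to reduce {\sc Anchored Induced Topological Minor} to {\sc Induced Disjoint Paths} on claw-free graphs and then apply Theorem~\ref{t-main}. The key observation is that $G$ contains $H$ as an induced topological minor anchored in $u_1,\ldots,u_k$ (with $u_i$ mapped to $x_i$) if and only if $G$ has an induced subgraph $W$ isomorphic to a subdivision of $H$ whose branch vertices are exactly $u_1,\ldots,u_k$. Such a $W$ decomposes into one induced path $P_{ij}$ per edge $x_ix_j\in E(H)$, where $P_{ij}$ joins $u_i$ and $u_j$; since $W$ is induced, these paths are internally disjoint and have no adjacencies other than at shared branch vertices, which is exactly the requirement that the $P_{ij}$ be mutually induced in the sense of conditions (i)--(iii). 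I would therefore create, for each edge $x_ix_j\in E(H)$, a terminal pair with one terminal placed on $u_i$ and its partner on $u_j$. As $H$ is simple, two distinct edges share at most one endpoint, so Property~2 holds, and nothing forbids $u_iu_j\in E(G)$, so Property~1 holds as well. The number of terminal pairs is $|E(H)|\leq\binom{k}{2}$, a function of the parameter, so by Theorem~\ref{t-main} the resulting instance can be solved in fpt time.

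Two cheap preprocessing steps are needed so that mutually induced paths match induced subdivisions exactly. First, for every non-edge $x_ix_j\notin E(H)$ I would reject the instance unless $u_iu_j\notin E(G)$: if both branch vertices lie in $W$ and are adjacent in $G$, then $G[W]$ would carry an edge absent from the subdivision. Second, for every vertex $x_m$ that is isolated in $H$ I would delete $N_G(u_m)$ from $G$; vertex deletion preserves claw-freeness, and any neighbor of $u_m$ used by a solution would be adjacent in $W$ to the isolated branch vertex $u_m$, which is forbidden, so this deletion discards no solution while ensuring that $u_m$ survives as an isolated vertex of $W$. Every remaining, non-isolated anchor is an endpoint of at least one terminal pair, and hence a terminal vertex of the instance.

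For correctness I would verify both directions, the backward one being the delicate one. Given an anchored subdivision $W$, the paths realizing its subdivided edges are induced and mutually induced because $W$ is induced, so the instance is a {\tt Yes}-instance. Conversely, from a set of mutually induced solution paths I would argue that $W=\bigcup_{x_ix_j\in E(H)}P_{ij}$ together with the isolated anchors induces precisely a subdivision of $H$: condition (iii) forbids adjacencies between inner vertices of distinct paths and between an inner vertex and any foreign anchor that is a terminal vertex; the neighborhood deletions take care of isolated anchors; and the non-edge precondition excludes spurious edges between branch vertices. The crux, and the step I expect to be the main obstacle, is reconciling subdivided with unsubdivided edges of $H$. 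This is handled cleanly by condition (i): whenever $u_iu_j\in E(G)$, every path from $u_i$ to $u_j$ of length at least two contains the chord $u_iu_j$ and is therefore not induced, which forces $P_{ij}$ to be the single edge $u_iu_j$, exactly as an induced subdivision requires; and whenever $u_iu_j\notin E(G)$ the path $P_{ij}$ necessarily has an interior and realizes a genuine subdivision. The one remaining subtlety -- an anchor of $H$-degree one whose unique incident edge is unsubdivided ceases to be a terminal -- is absorbed by the preprocessing of Lemma~\ref{l-pre}, whose Rule~2 already deletes the non-terminal neighbors of such locally satisfiable terminals, so that no inner vertex of another path can be adjacent to them. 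Putting these pieces together shows that Theorem~\ref{t-main} decides the instance, and hence the anchored problem, in fpt time.
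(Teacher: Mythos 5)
Your proposal is correct and follows essentially the same route as the paper's proof: one terminal pair $(u_i,u_j)$ per edge $x_ix_j$ of $H$, a reduction to {\sc Induced Disjoint Paths}, and an application of Theorem~\ref{t-main}, with $|E(H)|\leq\binom{k}{2}$ bounding the number of pairs. The two places where you deviate --- rejecting outright when $u_iu_j\in E(G)$ for a non-edge $x_ix_j$ of $H$, and deleting $N_G(u_m)$ for an isolated anchor rather than using a degenerate pair $(u_m,u_m)$ as the paper does --- are minor, and the first is in fact a check that the paper's own one-line equivalence silently needs, since Property~1 and conditions (i)--(iii) permit end-vertices of distinct solution paths to be adjacent, so without it the union of the paths can induce an edge between anchors that is absent from every subdivision of $H$.
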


\begin{proof}
Let $G$ be a claw-free graph with $k$ specified vertices ordered as $u_1,\ldots,u_k$ for some integer $k$. Let $H$ be an arbitrary $k$-vertex graph, whose vertices are ordered as $x_1,\ldots,x_k$. For each isolated vertex $x_i\in V(H)$, we define a terminal pair $(u_i,u_i)$.
For each edge $x_ix_j\in E(H)$, we define a terminal pair $(u_i,u_j)$. This leads to a set of terminal pairs $T=\{(s_1,t_1),\ldots,(s_\ell,t_\ell)\}$, where $\ell$ is the number of edges and isolated vertices of $H$. 
Because $H$ has no multiple edges, no two terminal pairs in $G$ coincide. Hence the created set of terminal pairs has Properties 1 and 2.  
Then $G$ contains an induced subgraph isomorphic to a subdivision of $H$ such that the isomorphism maps $u_i$ to $x_i$ for $i=1,\ldots, k$ if and only if
$G$ contains a set of $\ell$ mutually induced paths $P_1,\ldots,P_\ell$, such that $P_j$ has end-vertices $s_j$ and $t_j$ for $j=1,\ldots,\ell$.  Because $H$ is fixed, $\ell$ is a constant. Hence, we may apply Theorem~\ref{t-main}, and the result follows. \qed
\end{proof}

Observe that, using Theorem~\ref{t-anchored}, it is easy to solve the {\sc Induced Topological Minor} problem for pairs $(G,H)$ (where $G$ is a claw-free graph) in $O(f(|V(H)|)\ n^{|V(H)|+O(1)})$ time. We simply guess the anchors of  the topological minor in $n^{|V(H)|}$ time and then run the algorithm of Theorem~\ref{t-anchored} in $O(f(|V(H)|)\ n^{O(1)})$ time, for some function $f$.
However, this algorithm is hardly an improvement over the existing 
$n^{O(|V_H|)}$-time
algorithm for the {\sc Induced Topological Minor} problem for pairs $(G,H)$ (where $G$ is a claw-free graph) that was developed by Fiala \etal\cite{FKLP12}. We show in fact that any substantial improvement on this result is unlikely, since we prove below that the problem is \W[1]-hard.

\begin{theorem}\label{t-notan}
The {\sc Induced Topological Minor} problem is \W$[1]$-hard for 
pairs $(G,H)$ where $G$ and $H$ are line graphs, and $|V(H)|$ is the parameter.
\end{theorem}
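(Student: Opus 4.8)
The plan is to give a parameterized reduction from \textsc{Multicolored Clique} (also called \textsc{Partitioned Clique}), which is \W[1]-hard when parameterized by the number $k$ of colour classes~\cite{DF99}. Given an instance $F$ with vertex set partitioned into classes $V_1,\ldots,V_k$, I would construct in polynomial time a pair $(G,H)$ of line graphs with $|V(H)|$ bounded by some function $g(k)$, such that $G$ contains $H$ as an induced topological minor if and only if $F$ contains a clique with exactly one vertex in each class. Since the parameter $|V(H)|=g(k)$ then depends only on $k$, this yields the claimed \W[1]-hardness. Throughout I would use the observation stated just before the theorem, that $G$ contains $H$ as an induced topological minor exactly when $G$ has an induced subgraph isomorphic to a subdivision of $H$, together with two facts already available: induced subgraphs of line graphs are line graphs, and mutually induced paths in a line graph correspond one-to-one to vertex-disjoint paths in its preimage (as used in the proof of Lemma~\ref{l-line}).

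I would build $G$ and $H$ through explicit preimages, verifying line-graphness via Lemma~\ref{l-linegraph} (every neighbourhood a disjoint union of at most two cliques). For $H$, I would take $k$ \emph{junction} cliques $Q_1,\ldots,Q_k$, where $Q_i$ contains one vertex reserved for each other class $j$, and join, for every pair $i<j$, the vertex of $Q_i$ reserved for $j$ to the vertex of $Q_j$ reserved for $i$ by a short degree-two \emph{connector}; this makes $|V(H)|$ of order $\binom{k}{2}=g(k)$. For $G$, I would attach to each class $i$ a family of \emph{selection cliques}, one per vertex of $V_i$ encoding the choice $v_i\in V_i$, and, for every edge $v_iv_j\in E(F)$, an \emph{edge gadget} realized as an induced path linking the selection cliques of $v_i$ and of $v_j$; the gadgets are assembled so that $G$ is a line graph.

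The correctness argument is where claw-freeness works in our favour rather than against us. Since $G$ is claw-free, every induced subgraph of $G$ isomorphic to a subdivision of $H$ is itself claw-free; hence no vertex of degree at least three in such a subdivision can have two incident connectors that are genuinely subdivided, since the first inner vertices of two distinct induced paths are nonadjacent inner vertices and would form a claw together with a third neighbour. Consequently the high-degree junction part of $H$ must embed \emph{rigidly}, mapping each $Q_i$ into a single selection clique of $G$ (a consistent choice of $v_i\in V_i$), while only the degree-two connectors may be stretched into induced paths. Each connector for the pair $(i,j)$ is then forced onto an induced path through an edge gadget, which is present precisely when $v_iv_j\in E(F)$. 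In one direction an induced subdivision of $H$ thus reads off a multicolored clique of $F$; in the other direction a multicolored clique gives an embedding by placing the junctions on the chosen selection cliques and routing each connector through the corresponding edge gadget.

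The main obstacle, and where essentially all the technical work lies, is to design the gadgets so that three requirements hold simultaneously: first, that both $G$ and $H$ are genuinely line graphs, which restricts every junction and gadget to have neighbourhoods covered by two cliques; second, and most delicate, that there are no ``cheating'' embeddings, so that the only induced subdivisions of $H$ in $G$ are the intended clique encodings, ruling out spurious long induced paths, connectors that bypass edge gadgets, and junction placements that are inconsistent across different connectors; and third, that $|V(H)|$ remains a function of $k$ alone. Ensuring the second point while respecting the first is the crux: claw-freeness rigidifies the junctions for free, but one must still force each connector to traverse exactly one edge gadget and to agree with the selections made at both of its endpoints, which is precisely what pins the routing down to an actual multicolored clique of $F$.
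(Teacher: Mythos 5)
Your proposal is a strategy outline rather than a proof, and the part you yourself identify as ``the crux'' --- designing selection cliques, junctions, connectors and edge gadgets so that (a) both $G$ and $H$ are line graphs, (b) no cheating embeddings exist, and (c) $|V(H)|$ depends only on $k$ --- is exactly the part that is never carried out. Without an explicit construction one cannot check the rigidity claim you rely on. In particular, your argument that claw-freeness forces each junction clique $Q_i$ of $H$ to land inside a single selection clique of $G$ is only a heuristic: it rules out two subdivided connectors meeting at a common branch vertex, but it does not by itself exclude a junction clique being realized across several overlapping cliques of $G$, nor a connector being routed through parts of $G$ other than the intended edge gadget, nor inconsistent selections at the two ends of a connector. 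These are precisely the failure modes that a gadget-based \W[1]-hardness reduction must close off, and as written the proposal leaves all of them open. So there is a genuine gap: the reduction is not specified and its correctness is not established.

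It is worth contrasting this with the route the paper takes, which avoids gadgets entirely. The paper reduces from plain \textsc{Clique}: given $(F,k)$ with $k\geq 4$, it sets $G=L(F)$ and $H=L(K_k)$ and shows $F$ has a $k$-clique iff $G$ contains $H$ as an induced topological minor. The key observation is that your ``rigidity from claw-freeness'' phenomenon applies to \emph{all} of $L(K_k)$, not just to designated junctions: every vertex of $L(K_k)$ lies in two cliques of size $k-1\geq 3$, so if any edge of $L(K_k)$ were subdivided, the resulting degree-two vertex would have a neighbour that is the centre of a claw, contradicting that an induced subgraph of a line graph is a line graph. Hence no proper subdivision of $L(K_k)$ can occur, the induced-topological-minor relation collapses to the induced-subgraph relation, and (passing to preimages) to $K_k$ being a subgraph of $F$. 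If you want to salvage your approach, the lesson is that choosing $H$ so that it has \emph{no} degree-two connectors at all --- making the whole of $H$ rigid --- eliminates the routing and consistency issues you would otherwise have to engineer away.
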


\begin{proof}
We give a reduction from the {\sc Clique} problem, which asks whether a graph has a clique of size at least $k$. 
This problem is \W$[1]$-complete when parameterized  by~$k$ (cf.  Downey and Fellows~\cite{DF99}).

Let $G$ be a graph and $k$ an integer; we may assume without loss of generality that $k\geq 4$. We claim that $G$ has a clique of size $k$ if and only if 
$L(G)$ contains $L(K_{k})$ as an induced topological minor.

First suppose that $G$ has a clique of size $k$. Then it contains a graph $G'$ isomorphic to $K_k$ as an induced subgraph. 
In $L(G)$ we remove all vertices that correspond to edges incident with at least one vertex in $V(G)\setminus V(G')$. This leads to an induced subgraph in $L(G)$ that
is isomorphic to $L(K_k)$. It remains to observe that any induced subgraph of a graph is also an induced topological minor of that graph.

Now suppose that $L(G)$ contains $L(K_k)$ as an induced topological minor. Then there exists a sequence $S$ of vertex deletions and vertex dissolutions 
that modifies $L(G)$ into $L(K_k)$.
We claim that $S$ only consists of vertex deletions.
In order to obtain a contradiction, suppose that $S$ contains at least one vertex
dissolution. 
We may without loss of generality assume that all vertex deletions in $S$ occur before the vertex dissolutions in $S$.
Let $F$ be the graph obtained from $L(G)$ after these vertex deletions. 
Because the class of line graphs is closed under vertex deletions, $F$ is a line graph.
Moreover, by construction, $F$ is a subdivision of $L(K_{k})$.

By our assumption, $F$ contains at least one vertex $e$ of degree two that must be dissolved in order to obtain a graph isomorphic to $L(K_{k})$. Let $f$ be one of the two neighbors of $e$ in $F$.
Note that $L(K_{k})$ is the union of $k$ cliques $S_1,\ldots,S_k$ of size $k-1\geq 3$ 
that pairwise share exactly one vertex
in such a way that every vertex of $L(K_k)$ belongs to exactly two cliques $S_i$ and $S_j$. 
This implies that $ef$ must be an edge inside one of these cliques. However, then $f$ is the center of a claw. Because $H$ is a line graph, this is not possible. Hence, $S$ contains no vertex dissolutions, and consequently, $F$ is isomorphic to $L(K_{k})$. Because the vertex deletions in $S$ translate to edge deletions in $G$, we then find that $K_{k}$ is a subgraph of $G$. In other words, $G$ contains a clique of size $k$.
This completes the proof of Theorem~\ref{t-notan}. \qed
\end{proof}

It is less clear to what extent induced linkages can be used to find some fixed induced minor in a claw-free graph. So far, limited  progress has been made on the {\sc $H$-Induced Minor} problem for claw-free graphs, although more polynomial cases are known for this graph class than for general graphs~\cite{FKP12}.

\section{Conclusions}\label{s-con}

We showed that the {\sc Induced Disjoint Paths} problem is fixed-parameter tractable in $k$ for claw-free graphs. As a consequence, we also proved that the
problems {\problemIAC}, {\problemIAP}, and {\problemIAT} are fixed-parameter tractable in $k$, and that
the same result applies to {\sc Anchored Induced Topological Minor} when parameterized by the number of vertices in the target graph $H$.
We also showed that our results cannot be applied to the {\sc Induced Topological Minor} problem, which turned out to be \W[1]-hard even on line graphs. In this section, we show that  our result for the {\sc Induced Disjoint Paths} problem is also tight from two other perspectives, and we state some open problems.

It is natural to ask whether our results generalize to $K_{1,\ell}$-free graphs for $\ell\geq 4$. We show that this is unlikely.

\begin{proposition}\label{p-k14}
The problems {\sc 2-Induced Disjoint Paths}, {\sc 2-in-a-Cycle}, and {\sc $3$-in-a-Path} are \NP-complete even for $K_{1,4}$-free graphs.
\end{proposition}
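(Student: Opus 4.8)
The plan is to establish the hardness of one base problem on $K_{1,4}$-free graphs and then transfer it to the remaining two by reductions that keep the graph $K_{1,4}$-free. Membership in \NP\ is immediate for all three problems, since an induced linkage (respectively, an induced cycle or path through the specified vertices) is a certificate of size at most $|V(G)|$ whose validity can be checked in polynomial time, so only hardness needs attention. For the base case I would invoke the reduction of Derhy and Picouleau~\cite{derhy2009} and inspect the neighborhood of every vertex in the instances it produces: the claim is that no vertex ends up with four pairwise non-adjacent neighbors, so that the constructed graphs are $K_{1,4}$-free and already witness that \textsc{$2$-Induced Disjoint Paths} is \NP-complete on this class. (As a fallback one can instead start from the classical reductions of Fellows~\cite{Fe89} or Bienstock~\cite{Bi91}, which prove \NP-completeness of these problems on general graphs, and re-examine the gadget vertices to control their neighborhoods.)

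Next I would give the two inter-problem reductions, both of which are local and attach only low-degree connector vertices. To pass from \textsc{$2$-Induced Disjoint Paths} to \textsc{$2$-in-a-Cycle}, attach to the sources $s_1,s_2$ a common new vertex $a$ and to the sinks $t_1,t_2$ a common new vertex $b$; since $a$ and $b$ have degree two, every induced cycle through $a$ and $b$ must traverse $a$ via $s_1,s_2$ and $b$ via $t_1,t_2$, and therefore decomposes into exactly two mutually induced $s_it_i$-paths, while conversely two such paths close up into an induced cycle through $a$ and $b$. To reach \textsc{$3$-in-a-Path}, I would ``cut'' the desired cycle at one of the marked vertices, replacing a cycle through $a,b$ by a path through three marked vertices (obtained by splitting $a$ into two non-adjacent copies, or by a pendant/subdivision gadget at $a$), so that an induced path through the three marked vertices corresponds precisely to an induced cycle through $a$ and $b$. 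Chaining these with the base case yields all three \NP-completeness statements.

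The step I expect to be the main obstacle is verifying that these modifications preserve $K_{1,4}$-freeness, i.e. that no gadget introduces a vertex with four pairwise non-adjacent neighbors. Attaching the degree-two connectors $a$ and $b$ is harmless in isolation, but it enlarges the neighborhoods $N(s_i)$ and $N(t_i)$, so that an independent triple already present in such a neighborhood would be promoted to an independent quadruple; likewise, a naive vertex-split when cutting the cycle duplicates a common neighbor and can turn an independent triple in some $N(w)$ into an independent quadruple. The delicate part is therefore to arrange the gadgets so that every affected neighborhood keeps independence number at most three, for instance by subdividing the newly added edges and, when splitting a vertex, keeping one copy adjacent to a common neighbor. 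Once this $K_{1,4}$-freeness check is carried out case by case for each gadget, the reductions compose and the proposition follows.
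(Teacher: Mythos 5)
There is a genuine gap: your proof never actually establishes the base case. You propose to ``invoke the reduction of Derhy and Picouleau and inspect the neighborhood of every vertex,'' but the claim that no vertex acquires four pairwise non-adjacent neighbors is left entirely unverified, and the whole argument rests on it. (You also misattribute the target of that reduction: Derhy and Picouleau prove hardness of \textsc{$3$-in-a-Path}, not of \textsc{$2$-Induced Disjoint Paths}.) Similarly, the inter-problem gadgets you describe are accompanied by an explicit admission that the ``delicate part'' --- checking that they preserve $K_{1,4}$-freeness --- remains to be carried out case by case. So what you have is a plan with its two load-bearing verifications deferred, not a proof. There is also a correctness wrinkle in your \textsc{$2$-Induced Disjoint Paths} $\to$ \textsc{$2$-in-a-Cycle} gadget: mutually induced paths are allowed to have adjacent \emph{end-vertices}, whereas an induced cycle through $a$ and $b$ forbids any edge between the two paths, including $s_1s_2$ or $s_1t_2$; closing two mutually induced paths with the connectors $a,b$ therefore need not yield an induced cycle.

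The paper sidesteps all of this with one observation you missed: the known hardness results already hold for graphs of \emph{maximum degree at most three} --- Derhy and Picouleau for \textsc{$3$-in-a-Path}, and L\'ev\^eque et al.\ for \textsc{$2$-in-a-Cycle} with terminals of degree two --- and any graph of maximum degree at most three is automatically $K_{1,4}$-free, since a $K_{1,4}$ center has degree at least four. The only reduction then needed runs in the opposite direction from yours: from \textsc{$2$-in-a-Cycle} to \textsc{$2$-Induced Disjoint Paths}, by subdividing the two edges at each degree-two terminal and placing $s_1,s_2,t_1,t_2$ on the four new vertices, an operation that clearly keeps the maximum degree at three. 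No neighborhood inspection of any gadget is required. If you want to salvage your approach, replace the unverified ``inspect the reduction'' step with the degree bound, and reverse the direction of your first reduction.
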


\begin{proof}
Derhy and Picouleau~\cite{derhy2009} proved that {\sc $3$-in-a-Path}  is \NP-complete even for graphs with maximum degree at most three.
L\'{e}v\^{e}que \etal\cite{LLMT09} proved that {\sc $2$-in-a-Cycle} is \NP-complete even for graphs with maximum degree at most three and terminals of degree two. From this, it follows immediately that {\sc $2$-Induced Disjoint Paths} is \NP-complete for graphs with maximum degree at most three, because we can subdivide the two edges incident with each terminal and then place terminals $s_1, s_2, t_1, t_2$ on the four newly created vertices. It remains to observe that graphs of maximum degree at most three are  $K_{1,4}$-free.
\qed
\end{proof}

The next step would be to try to construct a polynomial kernel for {\sc Induced Disjoint Paths} restricted to claw-free graphs.
However, we show that this is not likely even for line graphs.
This follows from the work of Bodlaender, Thomass\'e, and Yeo~\cite{BTY11}, who showed that {\sc Disjoint Paths} has no polynomial kernel when parameterized by $k$, unless  \NP\ $\subseteq$\ co\NP$/$poly, together with the fact that an instance $(G,(s_1,t_1),\ldots,(s_k,t_k))$
of {\sc Disjoint Paths} can be translated to an instance $(L(G),(s_1',t_1'),\ldots,(s_k',t_k'))$ as follows. 
For each vertex in $G$ that represent $p\geq 1$ terminals we introduce a new vertex only adjacent to this vertex, and we let this new vertex represent the $p$ terminals instead. Then the added edges become the vertices that represent the terminals in $L(G)$.

\begin{proposition}\label{p-nopolykernel}
The
{\sc Induced Disjoint Paths}
problem restricted to line graphs has no polynomial kernel when parameterized by $k$, unless \NP\ $\subseteq$\ \textup{co}$\NP/$\textup{poly}.
 \end{proposition}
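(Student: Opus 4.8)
The plan is to rule out a polynomial kernel by exhibiting a \emph{polynomial parameter transformation} (in the sense of Bodlaender, Thomass\'e, and Yeo~\cite{BTY11}) from {\sc Disjoint Paths}, parameterized by $k$, to {\sc Induced Disjoint Paths} restricted to line graphs, also parameterized by $k$. I would first recall the composition principle underlying this technique: if $P$ and $Q$ are parameterized problems whose unparameterized versions are \NP-complete, and there is a polynomial-time map sending an instance $(x,k)$ of $P$ to an instance $(x',k')$ of $Q$ with $k'\leq p(k)$ for some polynomial $p$ and with $(x,k)$ a {\tt Yes}-instance if and only if $(x',k')$ is, then a polynomial kernel for $Q$ would yield one for $P$. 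Since Bodlaender \etal\cite{BTY11} proved that {\sc Disjoint Paths} has no polynomial kernel unless \NP\ $\subseteq$\ co$\NP/$poly, producing such a transformation with $Q$ our target problem establishes the proposition. The two \NP-completeness prerequisites are already available: {\sc Disjoint Paths} (with $k$ part of the input) is \NP-complete by Karp~\cite{Ka75}, and {\sc Induced Disjoint Paths} is \NP-complete for line graphs by Fiala \etal\cite{FKLP12}, while membership in \NP\ is immediate for both.

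The transformation is exactly the one sketched before the statement, and I would carry it out as follows. Given an instance $(G,(s_1,t_1),\ldots,(s_k,t_k))$ of {\sc Disjoint Paths}, first form $G'$ by attaching, to each vertex $v$ of $G$ that carries $p\geq 1$ terminals, a new pendant vertex $v'$ adjacent only to $v$, and reassign those $p$ terminals to the edge $vv'$. Then output $L(G')$ together with the terminal pairs placed on the vertices of $L(G')$ corresponding to these pendant edges. The graph $L(G')$ is a line graph by construction, the number of terminal pairs is unchanged, so $k'=k$ and the required parameter bound holds trivially, and the whole map is clearly computable in polynomial time.

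Correctness rests on the by-now familiar dictionary between the two path notions. As recorded in the proof of Lemma~\ref{l-line}, mutually induced paths in a line graph correspond one-to-one to vertex-disjoint paths in its preimage, a walk of edges in $G'$ sharing endpoints becoming a path in $L(G')$ and vice versa. The pendant gadget makes each terminal edge the unique edge at its pendant vertex, so the vertex of $L(G')$ representing a terminal is adjacent precisely to the vertices representing edges of $G'$ incident with the corresponding original vertex $v$; a solution path must therefore leave it along one of those edges, matching a path in $G'$ that starts at $v$. Tracing this correspondence in both directions shows that $G'$ admits $k$ vertex-disjoint $s_it_i$-paths if and only if $L(G')$ admits $k$ mutually induced $s_i't_i'$-paths, which is the required equivalence.

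The point requiring care --- rather than a deep obstacle --- is the exact matching of endpoint conditions: {\sc Induced Disjoint Paths} permits terminals of distinct pairs to coincide or be adjacent and only forbids interactions among \emph{inner} vertices, whereas {\sc Disjoint Paths} demands full vertex-disjointness. The pendant vertices are what reconcile these, since by making each terminal vertex of $L(G')$ correspond to a degree-one attachment in $G'$ I ensure that the start of each induced path is unambiguous and that the only freedom in the induced-paths model cannot be exploited to violate vertex-disjointness of the inner parts in $G'$. Once this is verified, invoking the composition principle of~\cite{BTY11} together with the \NP-completeness facts above completes the argument.
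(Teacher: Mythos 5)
Your proposal is correct and follows essentially the same route as the paper: a polynomial parameter transformation from {\sc Disjoint Paths} (which has no polynomial kernel by Bodlaender, Thomass\'e, and Yeo~\cite{BTY11}) obtained by attaching a pendant edge at each terminal vertex, taking the line graph, and placing the terminals on the vertices corresponding to the pendant edges. The extra detail you supply on the endpoint conditions and the path correspondence only elaborates what the paper leaves implicit.
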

 
The question whether the same result as in Proposition~\ref{p-nopolykernel} holds for {\problemIAC} and {\problemIAP} restricted to line graphs is open.

Instead of improving our result for the {\sc Induced Disjoint Paths} problem, we could also work towards solving a more general problem. In the definition of induced disjoint paths, we explicitly disallowed duplicate terminal pairs, that is, the set of terminal pairs is not a multiset. 

If we generalize to allow duplicate terminal pairs, then we can solve the {\sc $k$-Induced Disjoint Paths} problem for claw-free graphs in polynomial time for fixed $k$ as follows.
In a nutshell, we may assume without loss of generality that no vertex represents more than two terminals (otherwise we have
a no-instance). Then, for any two  terminal pairs $(s_i,t_i)$ and $(s_j,t_j)$ with $s_i=s_j$ and $t_i=t_j$, we replace
$(s_j,t_j)$ by a new pair $(s_j',t_j)$ where $s_j'$ is a neighbor of $s_i=s_j$.
This only adds an extra $O(n)$ factor to the running time for each pair of coinciding terminal pairs,
because we just have to explore all possible choices of such a neighbor. 

Determining  the parameterized complexity of the general case is still an open problem.
As a partial result towards answering this question, we consider the variation of  {\sc Induced Disjoint Paths} where all terminal pairs coincide.
For $k=2$, this problem is equivalent to the {\sc $2$-in-a-Cycle} problem, which is \NP-complete~\cite{Bi91,Fe89} for general graphs and solvable 
in~$O(n^2)$ time for $n$-vertex planar graphs~\cite{MRSS1994}. 
For claw-free graphs, 
recall
that no terminal vertex can represent more than two terminals in any {\tt Yes}-instance. Hence the problem can be reduced to the {\sc $2$-in-a-Cycle} problem, which is polynomial-time
solvable
on claw-free graphs~\cite{FKLP12}.

Finally, we note that there may be other natural parameters for the problems considered. For example, Haas and Hoffmann~\cite{HH06} consider the {\sc 3-in-a-Path} problem and  prove \W$[1]$-completeness for general graphs if the parameter is the length of an induced path that is a solution for {\sc $3$-in-a-Path}.

\end{document}